%% file: main.tex
\documentclass[submission,copyright,creativecommons]{eptcs}
\usepackage[T1]{fontenc}

\usepackage{amsthm}
\usepackage{amsmath}
\usepackage{amssymb}
\usepackage{latexsym}
\usepackage{graphics}
\usepackage{epsfig,epsf,rotating}
\usepackage{subfigure}
\usepackage{subcaption}
\usepackage{epsf}
\usepackage{comment}

\usepackage{url}
\usepackage{doi}
\usepackage{float}
\usepackage{booktabs}
\usepackage{algorithm}
\usepackage{algpseudocode}
\algtext*{EndIf}
\algtext*{EndFor}
\algtext*{EndWhile}
\algtext*{EndFunction}
\algtext*{EndProcedure}
\usepackage{xcolor}
\usepackage{caption}
\usepackage{setspace}

\usepackage{placeins}

\theoremstyle{plain}
\newtheorem{theorem}{Theorem}[section]
\newtheorem{lemma}[theorem]{Lemma}

\newtheorem{corollary}[theorem]{Corollary}

\theoremstyle{remark}
\newtheorem{remark}[theorem]{Remark}

\input{macros}

\title{Rabin Games Revisited}
\author{Antonio Di Stasio
\institute{City St George's, University of London, UK}
\email{antonio.di-stasio@city.ac.uk}
\and
Moshe Y. Vardi
\institute{Rice University, Houston, USA}
\email{vardi@cs.rice.edu}
}

\newcommand{\titlerunning}{Rabin Games Revisited}
\newcommand{\authorrunning}{A. Di Stasio \& M. Y. Vardi}

\begin{document}
\maketitle
\begin{abstract}
 Rabin games have their roots in Michael Rabin's seminal work on automata over infinite trees and form a classical model for infinite-round games with $\omega$-regular winning conditions. They play a central role in automata theory and temporal synthesis: translating an LTL specification into an equivalent deterministic Rabin automaton reduces synthesis to solving a Rabin game. In practice, however, Rabin games are typically solved indirectly, via parity games. One possible reason is that many direct algorithms have a factorial dependence on the Rabin index $k$, the number of Rabin pairs. In this work, we revisit Rabin's 1972 recursive algorithm. The algorithm is implicit in Rabin's proof and intertwined with finite-state determinacy. First, we ``excavate'' it as an explicit arena-based solver using standard reachability- and B\"uchi-game solving as basic building blocks, thus avoiding the exponential dependence on the alphabet in Rabin's original presentation. Second, we show that memoization removes the factorial term from the recursion-tree analysis. Finally, we introduce a variation that recurses on the Rabin index. These results provide a modern analysis of Rabin's recursive method and a basis for studying direct Rabin-game solvers for temporal synthesis.
\end{abstract}
\input{introduction}

\input{1.1-related-work}
\input{1-preliminaires}
\input{2-rabin_algorithms}

\input{3-new_algorithm_induction}

\section{Conclusions}

We "excavated" here the Rabin-game-solving algorithm that is implicit in Rabin’s 1972 paper \cite{rabin1972automata}. We obtained an explicit solver based on standard reachability- and Büchi-game solving. This perspective yields three main observations. First, the algorithm can be presented without the exponential dependence on the alphabet that appears in Rabin’s original formulation. Second, memoization removes the factorial term arising from the recursion-tree analysis by ensuring that each normalized subproblem is solved only once. Third, the recursion can be parameterized by the Rabin index rather than the Rabin number, yielding an alternative solver whose complexity depends on the number of Rabin pairs.  
We leave it to future work to compare the empirical performance of the two algorithms studied here. 

It also remains an interesting direction for future work to investigate symbolic implementations of these recursive procedures and to evaluate them experimentally. In particular, Rabin conditions arise naturally in automata-theoretic approaches to reactive synthesis, where an LTL specification is translated into a deterministic Rabin automaton. Nevertheless, current synthesis pipelines typically proceed by reducing the resulting Rabin game to a parity game. It would therefore be valuable to study the structure of the Rabin conditions that arise in practice and to assess whether direct Rabin-game solving can provide an effective algorithmic basis for temporal synthesis.

\newpage

\bibliographystyle{eptcs}
\bibliography{references}

\end{document}

%% file: macros.tex
\newcommand{\Nat}{{\rm I\kern-.23em N}}


%% file: introduction.tex
\section{Introduction}

Rabin games originate in a central early development of automata theory. In his proof of the decidability of the monadic second-order theory of infinite trees, Rabin introduced acceptance conditions that combine recurrence and eventual avoidance requirements~\cite{rabin69}. This decidability result, known as \emph{Rabin's Tree Theorem}, had a major impact on logic, automata theory, and infinite-duration games. Rabin's original decision procedure was nonelementary, and much subsequent work was devoted to obtaining sharper algorithmic bounds. In particular, Hossley and Rackoff later gave an alternative elementary decidability proof for the emptiness problem of Rabin automata on infinite trees ~\cite{HossleyR72}.

Rabin gave a more direct treatment of Rabin automata on infinite trees and Church's synthesis problem~\cite{rabin1972automata}. Although presented in a different form, his proof contains an implicit recursive procedure for solving Rabin games. A Rabin game is played on a finite arena by two players. Its winning condition is given by pairs $(L_i,R_i)$, and Player~0 wins a play if, for some pair, states from $L_i$ are visited infinitely often while states from $R_i$ are visited only finitely often \cite{buhrke1996strategy}. Thus, Rabin conditions express the simultaneous need to visit good sets infinitely often and to eventually avoid bad sets.

The algorithmic problem of solving Rabin games is to compute the states from which Player~0 has a winning strategy. Emerson and Jutla showed that this problem is NP-complete~\cite{EJ99}. Beyond its theoretical importance, the problem gained renewed relevance through the development of temporal synthesis initiated by Pnueli and Rosner~\cite{pnueli1989synthesis}. In temporal synthesis, the goal is to automatically construct a system that satisfies a temporal specification against all possible behaviours of the environment. This problem can be reduced to solving Rabin games, with the system and the environment viewed as the two players, by converting an LTL formula to an equivalent deterministic Rabin automaton on infinite words~\cite{pnueli1989synthesis}. 

Although Rabin conditions arise naturally in the automata-theoretic solution of reactive synthesis~\cite{pnueli1989synthesis}, modern synthesis pipelines often avoid solving Rabin games directly. Instead, they reduce the problem to parity games and then apply parity-game solvers ~\cite{KMSZ18,LMS20}. This approach is well established and has led to efficient tools \cite{MSL18, KMPZ25}, but it may also obscure the structure of the original Rabin condition. Indeed, it is explicitly observed that current Rabin-game solvers are not sufficiently efficient in practice \cite{KMSZ18}. This suggests that direct Rabin-game solving remains an important underexplored direction.

One possible explanation is that most known direct algorithms for Rabin games have a complexity bound containing a factorial dependence on the Rabin index (the number of Rabin pairs) $k$. This leaves open whether a different algorithmic view of Rabin games can yield a different complexity tradeoff. In particular, Rabin's 1972 algorithm has received comparatively little direct analysis. One reason is that the algorithm is not presented as a standalone solver: it is intertwined with the proof that whenever Player~0 wins a Rabin game, Player~0 also has a finite-state winning strategy. In addition, the original formulation yields an exponential dependence on the alphabet, to simplify the description of the algorithm.

The goal of this paper is to contribute to the theory of Rabin games by first revisiting this original recursive construction. We start by ``excavating'' the algorithm implicit in Rabin's 1972 proof and presenting it as an explicit solver for Rabin games. The resulting algorithm works directly with the Rabin condition. Our first observation is that the exponential dependence on the alphabet in Rabin's original presentation can be eliminated. By formulating the construction directly over arenas and using standard reachability- and B\"uchi-game solving, the algorithm can be analyzed in terms of the size of the game representation rather than the size of the explicit alphabet expansion.

Our second observation concerns memoization. A naive recursive implementation may solve the same normalized subproblem many times, leading to a factorial-size recursion tree. However, if recursive calls are memoized, then each distinct normalized subproblem is solved only once. This removes the explicit factorial term from the recursion-tree analysis, at the cost of an exponential dependence on the space of state-set configurations.

Finally, we show that the recursion need not follow Rabin's original parameter, which is the Rabin number, namely the number of states appearing in the right-hand sides of Rabin pairs. We offer a novel variation that recurses on the Rabin index, eliminating one Rabin pair at a time. This is significant for temporal synthesis, where the Rabin index can be much smaller than the Rabin number, and in particular may be logarithmic in it.

Overall, this paper provides a modern presentation and analysis of Rabin's recursive method, together with memoized state-based and index-based solvers. The purpose is not to claim dominance over existing parity-based or progress-measure algorithms, but to clarify a different algorithmic route for Rabin games. We believe this opens the door to future experimental work on whether direct Rabin-game solving can serve as an algorithmic basis for temporal synthesis.

%% file: 1.1-related-work.tex
\section{Related work}
A  classical algorithmic line for Rabin games is based on rankings and
progress measures. Piterman and Pnueli introduced direct Rabin and Streett rankings, obtaining an explicit ranking algorithm for Rabin games with time complexity $O(mn^{k+1}kk!)$ and space $O(nk)$, where $n$ is the number of states, $m$ the number of edges, and $k$ the Rabin index (number of Rabin pairs)~\cite{PitermanP06}. Their recursive fixpoint characterization also yields a symbolic version with stored intermediate values,  with time $O(n^{k+1}k!)$ and space $O(n^{k+1}k!)$.

More recently, Majumdar \emph{et al.} \cite{MST24} developed progress-measure
algorithms based on colourful universal trees, with running time
$\widetilde{O}(mn(k!)^{1+o(1)})$ and polynomial space.
These algorithms are polynomial in the arena size for fixed $k$, but still
retain a factorial or near-factorial dependence on the Rabin index.
Recent lower bounds show that this dependence is essentially unavoidable for
algorithms that remain polynomial in the arena size: assuming the Exponential Time Hypothesis (ETH), Rabin games
with $k$ pairs cannot be solved in time $2^{o(k\log k)}n^{O(1)}$
~\cite{casares2024simple}. 

A related line of work studies regular games through uniform
recursive and dynamic-programming methods.  In~\cite{LKX26}, a unified framework is developed, which applies to all regular
games, including Rabin and Streett games, via two distinct approaches. Their
\emph{recursive} solver transforms Rabin conditions into equivalent colored
Muller conditions and applies a recursive colored-Muller algorithm, running in time $O(|V|!\,|V|\,(|E|+k|V|))$ and space $O(|\mathcal{G}|+|V|^2)$.
Their \emph{dynamic-programming} solver transforms the Rabin condition into 
an equivalent Muller condition and applies a bottom-up dynamic-programming 
algorithm over all $2^{|V|}$ subsets, running in time $O((k|V|+2^{|V|}|E|)\,|V|)$ and space $O(|\mathcal{G}|+2^{|V|}\,|V|)$. The paper describes the latter as the first exponential-time method for Rabin games, arguing that all prior direct algorithms were super-exponential. Both algorithms are most competitive when $k$ is large relative to $|V|$ --- specifically when $k \in [|V|, 4^{|V|}]$. 

Our work is also related to symbolic algorithms for Emerson--Lei games.
Hausmann, Lehaut, and Piterman give a symbolic solution based on the Zielonka
tree of the Emerson--Lei acceptance formula~\cite{hausmann2024symbolic}.
Rabin objectives are expressible as Emerson--Lei objectives, but our construction does not proceed by building the Zielonka tree. Instead, it follows a Rabin-specific reduction: Rabin states or right-hand sides of Rabin pairs are declared unsafe, deleted from the condition, normalized, and already-winning states are folded into the reachability target. Thus the approaches solve overlapping classes of games, but their recursive structure and complexity tradeoffs are different.

%% file: 1-preliminaires.tex
\section{Preliminaries}
\subsection{Arenas and Games}

An \emph{arena} is a tuple $A=\langle V,v_0, \Sigma^0,\Sigma^1, \delta \rangle$, where $V$ is a finite set of nodes or \emph{states}, $v_0\in V$ is the initial state, $\Sigma^i$ is the set of actions of Player~$i$, for $i=0,1$, and $\delta: V\times\Sigma^0\times\Sigma^1 \rightarrow V$ is a total transition function, yielding a new state after actions from $\sigma^0$ and $\sigma^1$ are chosen by Player 0 and Player 1, respectively.
A \emph{play} in $A$ is an infinite sequence $\pi=(v_0,\sigma^0_0,\sigma^1_0), (v_1,\sigma^0_1,\sigma^1_1)\ldots \in (V\times\Sigma^0\times\Sigma^1)^\omega$, starting at $v_0$,  and such that for every $i\ge 0$, $v_{i+1} = \delta(v_i,\sigma^0_i,\sigma^1_i)$. Intuitively, at state $v_i$, Player 0 first chooses an action $\sigma_i^0\in\Sigma^0$, then Player 1 chooses an action $\sigma_i^1\in\Sigma^1$. The play then moves to $v_{i+1}$. 

We assume throughout that the initial state $v_0$ is a special state, in the sense that it is not in the range of $\delta$; that is, it is \emph{non-recurrent}. This assumption is without loss of generality, since any arena can be transformed into an equivalent one satisfying it by adding at most one fresh initial state. Formally, given an arena $A=\langle V,v_0,\Sigma^0,\Sigma^1,\delta\rangle$, we can construct an equivalent arena
$A'=\langle V\cup\{v_0'\},v_0',\Sigma_0,\Sigma_1,\delta'\rangle$, where $v_0'\notin  V$ is a new state, $\delta'(v_0',\sigma_0,\sigma_1) = \delta(v_0,\sigma_0,\sigma_1)$, and $ \delta'(v,\sigma_0,\sigma_1) = \delta(v,\sigma_0,\sigma_1)$ for all $v\in V$. 
To simulate a game starting from an arbitrary state $v$, we will make $v_0$ \emph{mimic} the state $v$. Given a state $v\in V$, we write $A[v_0 \mapsto v]$ for the arena obtained from $A$ by replacing the transition function at $v_0$ with the transition function of $v$, i.e., $\delta'(v_0,\sigma_0,\sigma_1) = \delta(v,\sigma_0,\sigma_1)$  for all $\sigma_0\in\Sigma_0$ and $\sigma_1\in\Sigma_1$, while $\delta'$ agrees with $\delta$ on all other states.


A \emph{game} $G=\langle A,\alpha\rangle$ is a pair consisting of an arena $A$ and a \emph{winning condition} $\alpha\subseteq (V\times\Sigma^0\times\Sigma^1)^\omega$, denoting the plays that are \emph{winning} for Player 0. A strategy for Player 0 is a function $s: (\Sigma^1)^* \rightarrow \Sigma^0$.  A play $\pi=(v_0,\sigma^0_0,\sigma^1_0), (v_1,\sigma^0_1,\sigma^1_1)\ldots$ is \emph{consistent} with the strategy $s$ if $v_{i+1}=\delta(v_i,\sigma_i^0,\sigma_i^1)$, where $\sigma_i^0=s(\sigma_0^1,\ldots,\sigma_{i-1}^1)$. A strategy $s$ is winning for Player~0 in $G$ if every play consistent with $s$ belongs to $\alpha$. 
Player 0 \emph{wins the game} $G=\langle A,\alpha\rangle$ if it has a \emph{winning strategy}, that is, a strategy $s$ such that all plays that follow $s$ are winning plays. 

Given an arena $A=\langle V,v_0, \Sigma^0,\Sigma^1, \delta \rangle$ and a set $W \subseteq V$ of states, we define $Force_0(W)$ to be the set of states from which Player 0 can force a move into $W$. 
Formally,
\[
Force_0(W) = \left\{ v\in V \;\middle|\; \exists \sigma_0\in\Sigma^0.\forall \sigma_1\in\Sigma^1,\
\delta(v,\sigma_0,\sigma_1)\in W
\right\}.
\]

\subsection{Reachability, B\"uchi, and Reachability-B\"uchi Games}

In a \emph{reachability game} $G=\langle A,F\rangle$ the winning condition $\alpha$ is defined by a subset $F\subseteq V$ of target states. A play $\pi$ is winning for Player 0 if some state from $F$ \emph{occurs} in $\pi$. Equivalently, Player 0 wins $G$ if it can force the play to eventually visit $F$. We assume that the initial state $v_0$ is not in $F$.

To solve a reachability game $G=\langle A,F \rangle$ we iteratively compute the set of states from which Player~0 can force a visit to $F$. Define $W^r_0 := F$, and for every $i\ge 0$, $W^r_{i+1} := W^r_i \cup Force_0(W^r_i)$.  Since the sequence $W^r_0 \subseteq W^r_1 \subseteq W^r_2 \subseteq \cdots$ is monotone and bounded by $V$, it converges in at most $|V|$ iterations to the least fixpoint $Win^r(A,F):=W^r_\infty$. By construction, $Win^r(A,F)$ is precisely the set of states from which Player 0 can force a visit to $F$. Hence, Player~0 wins the reachability game $G$ if and only if $v_0 \in Win^r(A,F)$.


Reachability games can be solved in time linear in the size $|G|$ of the game \cite{cook1970path}, where the dominant term is the transition relation, whose size is $O(|V|\cdot |\Sigma_0|\cdot |\Sigma_1|)$. Note, however, that recent tools, cf.~\cite{duret2025engineering}, use a \emph{semi-symbolic} representation of arenas, where the states are explicit, while the alphabets are represented symbolically, which is aimed at addressing the ``alphabet-explosion problem.''

In a \emph{B\"uchi game} $G=\langle A,B\rangle$ the winning condition $\alpha$ is defined by a subset $B\subseteq V$ of B\"uchi states. A play is winning for Player~0 if states from $B$ occur \emph{infinitely often} in $\pi$. Equivalently, Player~0 wins if it can force infinitely many visits to $B$. Since $v_0$ is non-recurrent, $v_0\not\in B$.
 
To solve a B\"uchi game $G=\langle A,B\rangle$, we first compute the largest subset of $B$ from which Player~0 can enforce infinitely many returns to the $B$ region \cite{vardi1984automata}. 
Define $ B_0 := B$, and, for every $i\ge 0$, 
$B_{i+1} := \{v \in B_i: v_0\in Win^r(A[v_0 \mapsto  v],B_i)\}$.
Note that $B_{i+1}$ can be computed by solving the reachability game
$G=\langle A,Force_0(B_i)\rangle$ and intersecting the winning states with $B_i$.
The sequence $B_0 \supseteq B_1 \supseteq B_2 \supseteq \cdots $ is monotone decreasing and therefore converges in at most $|V|$ iterations to a greatest fixpoint $B_\infty$. The set $B_\infty$ is the largest subset of $B$ such that every state in $B_\infty$ can force a return to $B_\infty$, and hence can participate in infinitely many visits to $B$. We then define $Win^B(A,B) := Win^r(A,B_\infty).$ Thus, Player~0 wins the B\"uchi game $G$ if and only if $v_0 \in Win^B(A,B).$

Finally, in a \emph{reachability-B\"uchi} game $G=\langle A,F,B \rangle$ the winning condition is defined by two subsets $F,B\subseteq V$. A play $\pi$ is in $\alpha$ if a state from $F$ occurs at least once in $\pi$ or a state from $B$ occurs infinitely often in $\pi$. We can assume, wlog, that $F\cap B=\emptyset$, since a visit to $F$ suffices for Player 0 to win the game.

A reachability--B\"uchi game combines a reachability objective with a B\"uchi
objective. To solve such a game, we compute the largest subset of the B\"uchi goal from which Player~0 can enforce another visit either to the B\"uchi objective or to the reachability objective. Define $B_0 := B$, and, for every $i\ge 0$, 
$B_{i+1} := \{v \in B_i: v_0\in Win^r(A[v_0 \mapsto  v],F\cup B_i)\}$.
Again, the resulting descending chain converges in at most $|V|$ iterations to a greatest fixpoint $B_\infty$. The set $B_\infty$ is the largest subset of $B$ such that every state in $B_\infty$ can force another visit to $F\cup B_\infty$. Thus, from every state in $B_\infty$, Player~0 can either reach the target set $F$ or return to the recurrent B\"uchi region.
We then define $Win^{rB}(A,F,B) := Win^r(A,F\cup B_\infty)$. Thus, Player~0 wins the reachability--B\"uchi game $G=\langle A,F,B\rangle$ if and only if $v_0 \in Win^{rB}(A,F,B)$. Since each reachability computation requires time $O(\|G\|)$ and the greatest fixpoint converges in at most $|V|$ iterations, the overall running time is $O(|V|\cdot\|G\|)$.

\begin{lemma}
Player 0 wins the reachability-B\"uchi game $G=\langle A,F,B \rangle$ iff $v_0\in Win^{rB}(A,F,B)$.
\end{lemma}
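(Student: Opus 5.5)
The proof has two directions: soundness (if $v_0\in Win^{rB}(A,F,B)$ then Player~0 wins) and completeness (if $v_0\notin Win^{rB}(A,F,B)$ then Player~0 does not win). Throughout I use the stated property of $Win^r$: it is exactly the set of states from which Player~0 can force a visit to the target.

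\emph{Soundness.} The key invariant is that every $v\in B_\infty$ satisfies $v_0\in Win^r(A[v_0\mapsto v],F\cup B_\infty)$. This holds because $B_\infty$ is a fixpoint: $B_{i+1}=B_i=B_\infty$. So from each $v\in B_\infty$ there is a reachability strategy $s_v$ that forces, in at least one step, a visit to $F\cup B_\infty$. I would build a strategy for $G$ in phases.
\begin{itemize}
\item First play the reachability strategy witnessing $v_0\in Win^r(A,F\cup B_\infty)$.
\item Whenever the play is at some $v\in B_\infty$, switch to $s_v$, which is meaningful via the mimicking arena $A[v_0\mapsto v]$.
\item Once $F$ is visited, play arbitrarily.
\end{itemize}
Each phase ends after finitely many steps with a visit to $F\cup B_\infty$. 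So either $F$ is eventually visited, or $B_\infty\subseteq B$ is visited infinitely often, and every consistent play is in $\alpha$. The only care needed is to check that the phase-switching yields a well-defined function $(\Sigma^1)^*\to\Sigma^0$; this is routine, since the current position and phase are determined by the history.

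\emph{Completeness.} Here I need a Player~1 counter-strategy, and I expect this to be the main obstacle. Since the paper's reachability characterization speaks only about Player~0, I would invoke determinacy of reachability games: if $v\notin Win^r(A',T)$, then Player~1 has a strategy keeping the play out of $T$ forever. The natural proof of this is the standard attractor argument. From a state outside the fixpoint $W^r_\infty$, every Player~0 action admits a Player~1 response leading outside $W^r_\infty$, because the state is not in $Force_0(W^r_\infty)$.

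The argument then goes by induction on the stage $i$ at which a state of $B$ leaves $B_i$. I claim: for every $v\in B\setminus B_\infty$, and also for $v_0$ when $v_0\notin Win^r(A,F\cup B_\infty)$, Player~1 can ensure that $F$ is never visited and $B$ is visited only finitely often.

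Take a state $v$ removed at stage $i$, so $v\in B_{i-1}\setminus B_i$ and $v$ cannot force a visit to $F\cup B_{i-1}$. Player~1 plays the avoiding strategy for $F\cup B_{i-1}$. The play then either never visits $F\cup B$ again after $v$, in which case we are done, or it visits some $u\in B\setminus B_{i-1}$. Such a $u$ was removed at a strictly earlier stage, and Player~1 switches to the strategy for $u$.

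Stage indices strictly decrease along the sequence of $B$-visits, so only finitely many $B$-visits occur, and $F$ is never visited. At $v_0$ the same argument applies with the target $F\cup B_\infty$, since every $B$-state reached lies outside $B_\infty$. Finally, no Player~0 strategy can be winning against this Player~1 behavior: fixing any Player~0 strategy, the response of Player~1 produces a consistent play outside $\alpha$.
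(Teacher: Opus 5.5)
Your proposal is correct and follows essentially the same route as the paper. Soundness chains the reachability strategies given by the fixpoint property of $B_\infty$, and completeness builds a Player~1 strategy that switches avoiding strategies at each visit to $B\setminus B_\infty$, with the removal stage (the paper's rank) strictly decreasing. Your attractor justification of reachability determinacy and your explicit quantification over all Player~0 strategies fill in details that the paper only invokes as ``correctness of reachability games''.
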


\begin{proof}
Assume first that $v_0\in Win^{rB}(A,F,B)=Win^r(A,F\cup B_\infty)$. Then Player~0 has a strategy that forces a visit to $F\cup B_\infty$. If a play reaches $F$, then Player 0 wins immediately. Otherwise, the play reaches some $b\in B_\infty$. 
Since $B_\infty$ is a fixpoint of the map $X\mapsto \{\,v\in X : v_0\in Win^r(A[v_0\mapsto v],F\cup X)\,\}$, we have that for every $b\in B_\infty$, $v_0\in Win^r(A[v_0\mapsto b],F\cup B_\infty)$.
Hence, from a visit to $b\in B_\infty$, Player~0 can force a visit to
$F\cup B_\infty$. 
Repeating this argument yields either an eventual visit to $F$ or infinitely many visits to $B_\infty\subseteq B$. In both cases the reachability–Büchi objective is satisfied, so Player~0 has a winning strategy from $v_0$. 

\vspace{1ex}

Assume now that $v_0\notin Win^{rB}(A,F,B)=Win^r(A,F\cup B_\infty)$.
By correctness of reachability games, Player~1 has a strategy that avoids
$F\cup B_\infty$ forever. Consider the descending chain $B_0\supseteq B_1\supseteq \cdots$. For every state $b\in B\setminus B_\infty$, let $k$ be the least index such that $b\in B_k\setminus B_{k+1}$, and call $k$ the rank of $b$. Since $b \in B_k$ and $b \notin B_{k+1}$, and
$B_{k+1} = \{v \in B_k : v_0 \in \mathrm{Win}^r(A[v_0 \mapsto v], F \cup B_k)\}$,
it follows that $v_0 \notin \mathrm{Win}^r(A[v_0 \mapsto b], F \cup B_k)$. Hence, whenever the play visits $b$, Player~1 can prevent any future
visit to $F\cup B_k$.
Consequently, every later visit to \(B\) must belong to \(B\setminus B_k\), and therefore has strictly smaller rank. 
Since ranks are natural numbers, they can decrease only finitely many times. Thus every play consistent with Player~1's strategy visits \(B\) only finitely often. Moreover, \(F\) is never reached. Therefore, the reachability--B\"uchi objective is not satisfied, and Player~0 has no winning strategy from \(v_0\). 
\end{proof}

Finally, we add the notion of \emph{unsafe} states, $U\subseteq V$, where Player 0 loses immediately upon entering an unsafe state. Note that we can assume wlog that $F\cap U=\emptyset$ and $U\cap B=\emptyset$. To account for unsafe states, we simply have to adapt the forcing function, with the new definition: $Force_0(W, U)$ is the set of states $w\not\in U$ where there exists an action $\sigma^0\in\Sigma^0$ such that for all actions $\sigma^1\in\Sigma^1$ we have $\delta(w,\sigma^0,\sigma^1)\in W \setminus U$. Formally,
\[
Force_0(W,U)
=
\left\{
w\in V\setminus U
\;\middle|\;
\exists \sigma^0\in\Sigma^0\;
\forall \sigma^1\in\Sigma^1,\;
\delta(w,\sigma^0,\sigma^1)\in W\setminus U
\right\}.
\]

Consequently, every state computed during the reachability and reachability–Büchi fixpoint constructions is guaranteed to avoid unsafe states. Using the modified forcing operator, we compute reachability while avoiding unsafe states by defining $W^r_0:=F$ and $W^r_{i+1}:=W^r_i\cup Force_0(W^r_i,U)$. As before, the resulting sequence converges to a least fixpoint, which we denote by $Win^r(A,U,F):=W^r_\infty$.

We provide below pseudocode descriptions of the algorithms introduced above.

\begin{algorithm}[!h]
\caption{\textsc{SolveReachability}$(A,U,F)$}
\begin{algorithmic}[1]
\State $W \gets F$
\Repeat
    \State $W_{\mathit{old}} \gets W$
    \State $W \gets W \cup \textsc{Force$_0$}(A,U,W)$
\Until{$W = W_{\mathit{old}}$}
\State \Return $W$
\end{algorithmic}
\end{algorithm}


\begin{algorithm}[!h]
\caption{\textsc{SolveReachabilityBuchi}$(A,U,F,B)$}
\begin{algorithmic}[1]
\State $B_0 \gets B$
\Repeat
    \State $B_{\mathit{old}} \gets B_0$
    \State $X \gets \textsc{SolveReachability}(A,U,\textsc{Force}_0(B_0 \cup F))$
    \State $B_0 \gets B_0 \cap X$
\Until{$B_0 = B_{\mathit{old}}$}
\State \Return \textsc{SolveReachability}$(A,U,F \cup B_0)$
\end{algorithmic}
\end{algorithm}

%% file: 2-rabin_algorithms.tex
\section{Rabin's Algorithm for Solving Rabin's Games}

In this section, we describe the algorithm implicit in \cite{rabin1972automata}. The description there is not as explicit as the description provided here for two reasons. First, Rabin's presentation combines the description of the algorithm with a proof that Rabin games enjoy the \emph{positional-determinacy} property (cf. \cite{gradel2004positional}); that is, whenever a player has a winning strategy, it also has a memoryless winning strategy, represented by a function $s:  V \rightarrow\Sigma_0$. Second, the framework in  \cite{rabin1972automata} dispenses with the alphabets $\Sigma^0$ and $\Sigma^1$ by absorbing them into the state set $V$ of the arena. This simplifies the notation, but makes it more difficult to pinpoint the impact of alphabet size on the running time of the algorithm. Since in the reactive-synthesis applications studied in \cite{rabin1972automata}, as well as in many later works, e.g., \cite{pnueli1989synthesis}, the alphabet is exponential in the size of the temporal-specification formula, we choose here to treat the alphabet explicitly rather than implicitly, in order to study its impact on the complexity of the algorithm. The explicit description of Rabin's algorithm will lead to an improved algorithm in the next section.

\subsection{Rabin Games and Extended Rabin Games}

In a Rabin game, the winning condition is specified by a sequence $\rho$ of pairs $ (L_i,R_i)$ of subsets of $V$; each pair is called a \emph{Rabin pair}. A play satisfies a Rabin pair $( L, R)$ if $L$ is visited infinitely often and $R$ is visited only finitely often along the play. We can assume wlog that $L\cap R=\emptyset$. A play satisfies the Rabin condition $\rho$ if it satisfies at least one Rabin pair in $\rho$. The \emph{Rabin index} of the game is the length of $\rho$.

Here we also consider \emph{extended} Rabin games, whose winning condition is specified by a tuple $\langle U, F,B,\rho\rangle$, which combines unsafe states $U$, a reachability goal $F$, a B\"uchi goal $B$, and a Rabin condition $\rho$. A play $\pi$ satisfies this winning condition if it satisfies the reachability goal, the B\"uchi goal, \emph{or} the Rabin goal, while avoiding unsafe states. Without loss of generality, we can assume that for each Rabin pair $(L,R)$ we have $(L\cup R)\cap U=\emptyset$. We can also  assume that no Rabin pair $( L,R)$ has $R=\emptyset$, since in this case we can just drop this pair and replace $B$ by $B\cup L$. 
Recall also that $v_0$ cannot occur in the winning condition $\langle U, F,B,\rho\rangle$.

\subsection{Reduction to Reachability-Rabin Games}
We now show that solving extended Rabin games can be reduced to solving a special class of extended Rabin games with a trivial B\"uchi condition. A \emph{reachability-Rabin game} is an extended Rabin game whose winning
condition is of the form $\langle U, F,\emptyset,\rho\rangle$ where $U$ is a set of unsafe states, $F$ is a reachability goal, $\emptyset$ is the trivial B\"uchi condition, and $\rho$ is a Rabin
condition. For a reachability--Rabin game $G=\langle A,\alpha \rangle$, we write $Win^{rR}(A,\alpha)$ for the set of states from which Player~0 has a winning strategy,
that is, the set of states $v$ such that Player 0 has a winning strategy in the game $G_v=(A_v,\alpha)$, where $A_v=A[v_0\mapsto v]$.
Suppose that this set can be computed for reachability--Rabin games. We show how this can be used to solve extended Rabin games.

Consider an extended Rabin game $G=\langle A,\alpha \rangle $ with $\alpha=\langle U, F,B,\rho\rangle$. Define $B_0=B$, and, for every $i\ge 0$, 
$B_{i+1}=\{v\in B_i: v_0\in Win^{rR}(A_v,\alpha_i)\}$
where $A_v=A[v_0\mapsto v]$ and
$\alpha_i=\langle U,F\cup B_i,\emptyset,\rho\rangle$.
Intuitively, Player 0 is now aiming to force another visit to $B_i$ or $F$, or otherwise win by satisfying the Rabin condition $\rho$. 
As before, this descending sequence converges in at most
\(|V|\) iterations to a greatest fixpoint \(B_\infty\).

\begin{lemma}
Player 0 wins the extended Rabin game $G=\langle A,\alpha \rangle $, with $\alpha=\langle U,F,B,\rho \rangle$ iff $v_0\in  Win^{rR}(A,\alpha_\infty)$, where $\alpha_\infty=\langle U,F\cup B_\infty,\emptyset,\rho\rangle$.
\end{lemma}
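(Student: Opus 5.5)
We mirror the proof of the preceding lemma for reachability--B\"uchi games, with the reachability subgame replaced by the reachability--Rabin subgame $Win^{rR}$. Both directions rest on the fixpoint property of $B_\infty$: for every $b\in B_\infty$ we have $v_0\in Win^{rR}(A[v_0\mapsto b],\alpha_\infty)$. For every $b\in B\setminus B_\infty$ we take its rank $k$, the least index with $b\in B_k\setminus B_{k+1}$, so that $v_0\notin Win^{rR}(A[v_0\mapsto b],\alpha_k)$. We also use that reachability--Rabin games are determined. Rabin's proof yields positional determinacy for Player~0, and determinacy itself follows from Borel (indeed $\omega$-regular) determinacy. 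Hence, from a losing position of $\alpha_k$, Player~1 has a strategy that avoids $F\cup B_k$ (and every unsafe visit is already a loss for Player~0) and falsifies $\rho$.

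\textbf{($\Leftarrow$)} Assume $v_0\in Win^{rR}(A,\alpha_\infty)$. Player~0 plays a winning strategy $s_0$ for $\alpha_\infty$ from $v_0$. Whenever the play first enters $B_\infty$ at some state $b$, Player~0 switches to a winning strategy $s_b$ for $\alpha_\infty$ from $b$; such a strategy exists by the fixpoint property. She repeats this at each subsequent visit to $B_\infty$. Every resulting play avoids $U$, since each segment does. Now split into cases. If $F$ is ever reached, Player~0 wins. If $B_\infty$ is visited infinitely often, the B\"uchi goal $B$ is met, since $B_\infty\subseteq B$. Otherwise, there are finitely many switches, so the play eventually follows a single strategy $s_b$ forever without reaching $F\cup B_\infty$. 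Since $s_b$ is winning, the suffix satisfies $\rho$, and because $\rho$ is prefix-independent, the whole play does too.

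\textbf{($\Rightarrow$)} Assume $v_0\notin Win^{rR}(A,\alpha_\infty)$. Player~1 uses a spoiling strategy for $\alpha_\infty$ from $v_0$. Whenever the play visits some $b\in B\setminus B_\infty$ of rank $k$, Player~1 switches to a spoiling strategy for $\alpha_k$ from $b$. That strategy prevents any later visit to $F\cup B_k$, so every later visit to $B$ lies in $B\setminus B_k$ and has strictly smaller rank. Ranks therefore decrease, so there are finitely many switches. Consequently $B$ is visited finitely often and $F$ is never reached. After the last switch, Player~1's current strategy ensures that either $U$ is hit or $\rho$ fails on that suffix, hence on the whole play by prefix independence. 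In every case Player~0 loses.

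The main obstacle is the $(\Rightarrow)$ direction. It needs determinacy of reachability--Rabin games to turn ``Player~0 does not win'' into an actual Player~1 strategy, and it needs care that the strategy switches keep the plays consistent. We invoke determinacy from Rabin's positional-determinacy result, or from general $\omega$-regular determinacy. The rest is the same rank argument as in the previous lemma, together with the prefix-independence of the Rabin condition.
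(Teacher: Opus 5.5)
Your proposal is correct and follows essentially the same route as the paper. For ($\Leftarrow$) you compose winning strategies for $\alpha_\infty$ at each visit to $B_\infty$; for ($\Rightarrow$) you switch Player~1's spoiling strategies $\alpha_k$ along strictly decreasing ranks. You also make explicit two points the paper uses only implicitly: determinacy, which supplies the spoiling strategies, and prefix-independence of $\rho$, which justifies the final-segment argument.
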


\begin{proof}
\textbf{($\Leftarrow$)}
Assume $v_0 \in Win^{rR}(A,\alpha_\infty)$ and let $\sigma$ be a winning strategy for $(A,\alpha_\infty)$ from $v_0$, where $\alpha_\infty=\langle U,F\cup B_\infty,\emptyset,\rho\rangle$.

Let $\pi$ be a play consistent with $\sigma$.  Since $\sigma$ is winning for $\alpha_\infty$, the play $\pi$
avoids $U$ and satisfies one of the following: $\pi$ reaches $F\cup B_\infty$ at least once, or
$\pi$ satisfies the Rabin condition $\rho$.
If $\pi$ satisfies $\rho$, then it satisfies the extended objective $\alpha=\langle U,F,B,\rho\rangle$.
If $\pi$ reaches $F$, then it also satisfies $\alpha$.
Assume therefore that $\pi$ does not satisfy $\rho$ and does not reach $F$. Then $\pi$ must reach  a state $b \in B_\infty$.
By definition of $B_\infty$, we have $B_\infty\subseteq Win^{rR}(A,\alpha_\infty)$. Thus, Player~0 has a winning strategy for $(A,\alpha_\infty)$ from $b$. From this point on, Player~0 switches to such a strategy. 
Applying the same argument repeatedly to the suffix of $\pi$ starting at $b$, either the suffix satisfies $\rho$, or reaches $F$, or reaches $B_\infty$ again.

Since by assumption $\pi$ never satisfies $\rho$ and never reaches $F$, the suffix must reach $B_\infty$ again. Repeating this argument inductively shows that $\pi$ visits $B_\infty$ infinitely often. As $B_\infty \subseteq B$, it follows that $\pi$ visits $B$ infinitely often. Hence, $\pi$ satisfies the extended objective $\alpha$. Since $\pi$ was arbitrary, $\sigma$ is winning for the extended Rabin game from $v_0$.

\textbf{($\Rightarrow$)} We prove the contrapositive. Assume $v_0\notin Win^{rR}(A,\alpha_\infty)$.
Fix a Player-$1$ strategy $\tau_\infty$ that spoils $\alpha_\infty$ from $v_0$. Let $\pi$ be a play consistent with $\tau_\infty$. If $\pi$ reaches $U$, then $\pi$ is losing for $\alpha$. Otherwise, $\pi$ avoids $U$ and, since $\tau_\infty$ spoils $\alpha_\infty$, it does not reach $F\cup B_\infty$
and does not satisfy $\rho$.

It remains to show that Player~$1$ can ensure that $B$ is visited only finitely often. For $b\in B\setminus B_\infty$ define $\mathrm{rank}(b):=\min\{\,k\ge 0 \mid b\in B_k\setminus B_{k+1}\,\}$. Suppose $b\in B_k\setminus B_{k+1}$, and let $\alpha_k=\langle U,F\cup B_k,\emptyset,\rho\rangle$. Since $b\not\in B_{k+1}$, we must have $b\notin Win^{rR}(A,\alpha_k)$, hence Player~$1$ has a strategy
$\tau_k^b$ that spoils $\alpha_k$ from $b$. 
Let $\pi$ be a $U$-avoiding play consistent with $\tau_\infty$, and suppose that $\pi$ does visit $B \setminus B_\infty$; let $b$ be the first visited state in $B\setminus B_\infty$ with $k=\mathrm{rank}(b)$. Define a Player~$1$ strategy $\tau$ that follows $\tau_\infty$ until the first visit to some $b\in B\setminus B_\infty$, and from then on follows $\tau_k^b$, where $k=\mathrm{rank}(b)$. Since $\tau_k^b$ spoils $\alpha_k$, the suffix from $b$ avoids $F\cup B_k$ and does not satisfy $\rho$. In particular, it avoids $B_k$, hence no state of $B$ with rank $\ge k$ can appear again. Therefore, a later visit to $B$ must have strictly smaller rank. This argument can be repeated to show that Player 1 has a strategy where $B$ can be visited only finitely many times, such that $F$ is never reached and $\rho$ is spoiled, so $\alpha$ is spoiled. Hence Player~$0$ does not win $G$ from $v_0$.

This proves the contrapositive and completes the proof. 
\end{proof}

\subsection{Solving Reachability-Rabin Games by induction on Rabin number}

Consider now a reachability-Rabin game $G=\langle A,\alpha \rangle$, with $\alpha=\langle U,F,\emptyset,\rho \rangle$,  where $\rho=\{(L_1,R_1),\ldots$, $ (L_k,R_k)\}$. Let $R=\bigcup_{i=1}^k R_i$. We can assume that $R\cap F=\emptyset$, since a play that hits $F$ is already winning for Player 0. We call the states in $R$ \emph{Rabin states}, and call $|R|$ the \emph{Rabin number} of $G$.

We first solve the reachability game $G_r= \langle A,U,F \rangle$, and compute $Win^r(G_r)$. Clearly, $Win^r(G_r) \subseteq Win^{rR}(G)$. Thus, if $v_0\in Win^r(G_r)$, then $v_0 \in Win^{rR}(G)$.
Moreover, if a play enters a state $u\in Win^r(G_r)$, then Player 0 can continue from there with the strategy that placed $u$ in $Win^r(G_r)$. Hence, all states in $Win^r(G_r)$ can be declared to be \emph{reachability states} by using the revised winning condition $\alpha_r=(U,F\cup Win^r(G_r),\emptyset, \rho-Win^r(G_r))$, where $\rho-Win^r(G_r)$ is obtained from $\rho$ by replacing each Rabin pair $(L,R)$ by the pair $(L-Win^r(G_r),R-Win^r(G_r))$. Thus, from now on, we assume that when we have a reachability-Rabin game $G=\langle A,\alpha \rangle $, with $\alpha=\langle U,F,\emptyset,\rho \rangle$, we have that $Win^r(G) \subseteq F$. If $v_0\in Win^r(G)$, then the game is \emph{trivial}, which we assume is not the case.



We now develop an algorithm for solving reachability-Rabin games by induction on the Rabin number.  Note that when the Rabin number is 0, all Rabin pairs are of the form $(L_i,\emptyset)$. Hence, the reachability-Rabin condition $\langle U, F,\emptyset, \rho\rangle$ is equivalent to the reachability-B\"uchi condition $\langle U, F,\cup_{i=1}^k L_i\rangle$. Thus, we are back to reachability-B\"uchi games with unsafe states, solved in the previous section. We now show that solving Rabin games with Rabin number $n$ reduces to solving Rabin games with Rabin number $n-1$. 

Consider a Rabin condition $\rho=\{(L_1,R_1),\ldots,(L_k,R_k)\}$. For $v\in V$, let $\rho-\{v\}$ be obtained by replacing each Rabin pair $(L_i,R_i)$ by $(L_i-\{v\},R_i-\{v\})$. A Rabin pair $(\emptyset,R)$ can be discarded, since it cannot be satisfied by any play. Rabin pairs with empty right-hand side can be converted to a B\"uchi condition, as discussed above.

A \emph{game tree} is a mapping $g: (\Sigma^1)^* \rightarrow V$.
Suppose Player 0 has a winning strategy $s: (\Sigma^1)^* \rightarrow \Sigma^0$. The \emph{game tree} $g_s: (\Sigma^1)^* \rightarrow V$ for $s$ is defined inductively: $g_s(\varepsilon)=v_0$, and, for $x\in (\Sigma^1)^*$ and $\sigma^1\in \Sigma^1$, we have $g_s(x\cdot \sigma^1)=\delta(g_s(x),s(x),\sigma^1)$. We say that $g_s$ is a \emph{winning game tree} if $s$ is a winning strategy. A game tree $g': (\Sigma^1)^* \rightarrow V$ is a \emph{subtree} of a game tree $g: (\Sigma^1)^* \rightarrow V$ \emph{rooted} at $x\in (\Sigma^1)^*$ if $g'(y)=g(xy)$ for all $y\in (\Sigma^1)^*$.

Let $G=\langle A,\alpha\rangle$ be a reachability-Rabin game with arena $A=\langle V,v_0, \Sigma^0,\Sigma^1, \delta \rangle$ and winning condition $\alpha=\langle U, F,\emptyset, \rho\rangle$. Assume that Player 0 has a winning strategy $s: (\Sigma^1)^* \rightarrow \Sigma^0$ in $G$, with game tree $g_s$. 
When the game hits a state in $F$, Player 0 can declare victory and stop the play. Thus, wlog, we can delete from $g_s$ all nodes $xy$, where $g_s(x)\in F$ and $y\not=\varepsilon$. Equivalently, we can assume that in $g_s$, all nodes labeled by states in $F$ are leaves. We say that $g_s$ is \emph{pruned}. 
Still, $g_s$ must have infinitely many nodes, since otherwise Player 0 has a winning strategy for $G_r$ from $v_0$, which would make the game trivial.

\begin{lemma} [The State-Avoidance Lemma]
Let $G=\langle A,\alpha\rangle$ be a reachability-Rabin game
with arena $A=\langle V,v_0, \Sigma^0,\Sigma^1, \delta \rangle$, winning condition $\alpha=\langle U, F,\emptyset, \rho\rangle$, and Rabin number $n>0$. We assume wlog that $Win^r(A,U,F)\subseteq F$. Suppose that Player 0 has a winning strategy in $G$ from a state $v\not\in F$. Let  $s: (\Sigma^1)^* \rightarrow \Sigma^0$ be a winning strategy for Player 0 in $G$ from $v$, with pruned game tree $g_s$. Then there exists a state $v'\in R$ such that $g_s$ has an infinite subtree $g'$ in which $v'$ does not occur.
\end{lemma}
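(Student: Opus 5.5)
We argue by contradiction. Suppose that for every Rabin state $r\in R$, every infinite subtree of $g_s$ contains an occurrence of $r$. Here a subtree rooted at $x$ means the set of nodes $xy$ that remain in the pruned tree. Under this assumption we will build a single infinite branch of $g_s$ that visits every state of $R$ infinitely often. Such a branch avoids $F$ and $U$ and satisfies no Rabin pair, so it is a losing play consistent with $s$. This contradicts the assumption that $s$ is winning from $v$.

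\textbf{Step 1: the tree and its subtrees are infinite.} The pruned tree $g_s$ is infinite. If it were finite, every branch would hit $F$ within bounded depth, so $v\in Win^r(A,U,F)\subseteq F$, contradicting $v\notin F$. The same reasoning applies at any node: a node $x$ with $g_s(x)\notin F$ has an infinite subtree, because the restriction of $s$ to continuations of $x$ is winning from $g_s(x)$. That state lies outside $F\supseteq Win^r(A,U,F)$, so Player 0 cannot force a visit to $F$ from it, and the subtree rooted at $x$ cannot be finite. Every node of the pruned tree whose label is not in $F$ is of this kind. We use König's lemma only in the following sense: the branching is finite (by $|\Sigma^1|$), so an infinite subtree contains an infinite branch.

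\textbf{Step 2: building the bad branch.} Enumerate $R=\{r_1,\dots,r_n\}$, with $n>0$. Start at the root $x_0=\varepsilon$. Given a node $x_j$ whose label is not in $F$, its subtree is infinite by Step 1. By the contradiction hypothesis it contains a node $x_j y$ labeled $r_{(j \bmod n)+1}$. Since $R\cap F=\emptyset$, this node is not a leaf, so its own subtree is again infinite. Set $x_{j+1}=x_j y$ with $y\neq\varepsilon$; if the occurrence is at the root itself, first descend to a child that is not in $F$, which exists because the subtree is infinite. The nodes $x_0\prec x_1\prec\cdots$ determine a single infinite branch, which is a play consistent with $s$. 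This play never enters $F$, because $F$-nodes are leaves, and it visits each $r_i$ infinitely often. Hence every $R_i$, being nonempty, is visited infinitely often, so no Rabin pair is satisfied and the play loses. This is the contradiction, and therefore some $v'\in R$ is absent from some infinite subtree $g'$.

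\textbf{Main obstacle.} The delicate point is the claim in Step 1 that every non-$F$ node of the pruned tree has an infinite subtree. This is what lets the construction continue after each visit to a Rabin state. The claim relies on two facts: the normalization $Win^r(A,U,F)\subseteq F$, and the observation that a finite winning subtree would give a strategy forcing $F$. Together they give "finite subtree $\Rightarrow$ state in $Win^r$". We also need that the interleaving in Step 2 keeps all the $x_j$ on a common branch, which holds because each $x_{j+1}$ is chosen inside the subtree rooted at $x_j$. The remaining points, namely finite branching and the fact that $U$ is never visited because $s$ is winning, are routine.
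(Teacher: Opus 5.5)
Your proof is correct and follows the same route as the paper. You assume no Rabin state is avoided by an infinite subtree, cycle through $r_1,\dots,r_n$ using $Win^r(A,U,F)\subseteq F$ to keep the subtree below each visited Rabin state infinite, and obtain an infinite $F$-free branch on which every nonempty $R_i$ recurs, contradicting that $s$ is winning; your explicit handling of strict descent when the current node already carries the required label, and of the nonemptiness of each $R_i$, are small points the paper leaves implicit.
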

\begin{proof}
Let $R=\{v_1,\ldots,v_n\}$. If $g_s$ has an infinite subtree $g'$ where $v_1$ does not occur in $g'$, we are done. Otherwise, $v_1$ must occur in $g_s$, that is, there is some $x_1\in (\Sigma^1)^*$ such that $g_s(x_1)=v_1$. The subtree $g_{x_1}$ rooted at $x_1$ is a winning game tree for Player 0 starting from $v_1$. If $g_{x_1}$ is finite, then Player 0 wins the reachability game with unsafe set $U$ and
target $F$ from $v_1$, so $v_1\in Win^r(A,U,F)$. By assumption, $v_1\in F$. Hence $v_1\not\in R$ and therefore $g_{x_1}$ is not finite.


So, assume that $g_{x_1}$ is infinite. We can now repeat the same argument. If $g_{x_1}$ has an infinite subtree $g'$ where $v_2$ does not occur in $g'$, we are done. Otherwise, $v_2$ must occur in $g_{x_1}$, that is, there is some $x_2\in (\Sigma^1)^*$ such that $g_{x_1}(x_2)=v_2$. The subtree $g_{x_2}$ rooted at $x_2$ is a winning game tree for Player 0 starting from $v_2$. Again, if $g_{x_2}$ is finite then it must be that Player 0 wins the   reachability game with unsafe states $U$ and target $F$ from $v_2$, so $v_2\in Win^r(A,U,F)$, which implies that $v_2\in F$, and $v_2\not\in R$. It follows that $g_{x_2}$  is also infinite.

By repeating this argument, going through $v_1,\ldots,v_n$ in a cycle, we have found an infinite branch $p$ of $g_s$ on which each of $v_1,\ldots,v_n$ is visited infinitely often. Thus, the Rabin condition $\rho$ is not satisfied, and so $p$ must visit $F$. But $g_s$ is pruned whenever $F$ is visited, so the pruned tree $g_s$ cannot contain an infinite branch that visits $F$.

It follows that there exists a state $v'\in R$ such that $g_s$ has an infinite subtree $g'$ in which $v'$ does not occur in $g'$. 
\end{proof}

Let $G=\langle A,\alpha\rangle$ be a reachability--Rabin game with arena $A=\langle V,v_0, \Sigma^0,\Sigma^1, \delta \rangle$, winning condition $\alpha=\langle U, F,\emptyset, \rho\rangle$, and Rabin number $n>0$, where $Win^r(A,U,F)\subseteq F$ and $v_0\not\in F$. Let  $s: (\Sigma^1)^* \rightarrow \Sigma^0$ be a winning strategy for Player 0 in $G$, with pruned game tree $g_s$. By the State-Avoidance Lemma, there exists a state $v'\in R$ such that $g_s$ has an infinite subtree $g'$ rooted at some node $x\in(\Sigma^1)^*$, where $v'$ does not occur in $g'$. Let $g_s(x)=v$. Then Player 0 has a winning strategy in the game $G=\langle A[v_0\mapsto v],\alpha\rangle$ such that no play visits the state $v'$. But then we might as well consider $v'$ to be an unsafe state, since the play never enters it. This gives the following reduction:

\begin{lemma}
Let $G=\langle A,\alpha\rangle$ be a nontrivial reachability--Rabin game with arena $A=\langle V,v_0, \Sigma^0,\Sigma^1, \delta \rangle$, and winning condition $\alpha=\langle U, F,\emptyset, \rho\rangle$, where $Win^r(A,U,F)\subseteq F$. If Player 0 has a winning strategy $s$ from a state $v$ such that some state $v'\in R$ does not occur in $g_s$, then Player 0 has a winning strategy from $v$ in the extended Rabin game  $G'=\langle A,\alpha'\rangle$, with the winning  condition $\alpha'=\langle U\cup \{v'\}, F-\{v'\},\emptyset, \rho-\{v'\}\rangle$.
\end{lemma}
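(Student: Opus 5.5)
The plan is to show that the same strategy $s$ that is winning for $\alpha$ from $v$ is also winning for $\alpha'$ from $v$. No new strategy needs to be built. The key fact is that $v'$ never labels a node of $g_s$. Hence every play consistent with $s$ from $v$, that is, every branch of $g_s$, never visits $v'$. We take $g_s$ unpruned here; if one prefers the pruned tree, we first make Player~0 stop at the first visit to $F$, as the paper does.

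Fix a play $\pi$ consistent with $s$ from $v$. Because $s$ wins for $\alpha=\langle U,F,\emptyset,\rho\rangle$, the play $\pi$ avoids $U$. It also never visits $v'$, so it avoids $U\cup\{v'\}$, which is the safety part of $\alpha'$.

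Since $s$ wins for $\alpha$, there are two cases. If $\pi$ reaches $F$, the reached state is not $v'$, so $\pi$ reaches $F-\{v'\}$. Otherwise $\pi$ satisfies some Rabin pair $(L_i,R_i)$. Then $L_i$ is visited infinitely often and $R_i$ only finitely often. Since $v'$ is never visited, $L_i-\{v'\}$ is still visited infinitely often, and $R_i-\{v'\}\subseteq R_i$ is still visited only finitely often. So $\pi$ satisfies the pair $(L_i-\{v'\},R_i-\{v'\})$ of $\rho-\{v'\}$.

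Some pairs of $\rho-\{v'\}$ may be normalized: a pair $(\emptyset,R)$ is dropped, and a pair $(L,\emptyset)$ is moved into the B\"uchi set. Neither step affects the argument. The pair $\pi$ satisfies has a nonempty left side, since $\pi$ visits it infinitely often, so it is never dropped. If its right side became empty, $\pi$ satisfies the resulting B\"uchi condition $L$, and the extended winning condition is a disjunction, so $\pi$ still wins.

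The only real obstacle is handling the premise correctly. The statement assumes $v'$ does not occur in all of $g_s$. The State-Avoidance Lemma only gives a subtree avoiding $v'$, rooted at a node labelled $v$, so the statement should be applied with $s$ replaced by the residual strategy at that node. That residual strategy is still winning from $v$ because suffixes of winning plays remain winning, as the Rabin and reachability conditions are prefix-independent on infinite pruned branches. With that understood, the lemma is a direct verification.
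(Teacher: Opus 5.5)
Your proposal is correct and follows the paper's reasoning. The paper justifies this lemma only by remarking that no play consistent with $s$ enters $v'$, so $v'$ may as well be declared unsafe. Your check that the same strategy $s$ avoids $U\cup\{v'\}$ and either reaches $F-\{v'\}$ or satisfies a pair of $\rho-\{v'\}$ (with normalization harmless) is exactly the verification behind that remark.
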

Note that if $G$ has Rabin number $n$, then $G'$ has Rabin number smaller than $n$. As we saw earlier, when we delete states from a Rabin condition, we may obtain Rabin pairs with empty right-hand sides, which then need to be consolidated into a B\"uchi condition.
Combining the lemmas, we obtain:

\begin{theorem}
Let $G=\langle A,\alpha\rangle$ be a reachability-Rabin game with arena $A=\langle V,v_0, \Sigma^0,\Sigma^1, \delta \rangle$ and winning condition $\alpha=\langle U, F,\emptyset, \rho\rangle$, where $Win^r(A,U,F)\subseteq F$. If $G$ is not trivial, then Player 0 has a winning strategy in $G$ iff there are states $v$ and $v'\in R$ such that:
\begin{enumerate}
\item 
Player 0 has a winning strategy in the extended Rabin game $G'=\langle A[v_0\mapsto v],\alpha'\rangle$, where $\alpha'=\langle U\cup \{v'\}, F-\{v'\},\emptyset, \rho-\{v'\}\rangle$, and
\item 
Player 0 has a winning strategy in the extended Rabin game  $G''=\langle A,\alpha''\rangle$, where $\alpha''=\langle U, F\cup \{v\},\emptyset, \rho-\{v\}\rangle$.
\end{enumerate}
\end{theorem}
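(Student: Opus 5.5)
The proof splits into the two directions, with the forward direction resting on the State-Avoidance Lemma and the reduction lemma that follows it, and the backward direction on a strategy-composition argument of the same kind as in the reduction from extended to reachability--Rabin games.

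\textbf{($\Rightarrow$)} Assume Player~0 wins $G$ from $v_0$ with strategy $s$, and let $g_s$ be the pruned game tree. Since $G$ is not trivial, $v_0\notin F$. The State-Avoidance Lemma then gives a state $v'\in R$ and a node $x$ such that the subtree $g'$ of $g_s$ rooted at $x$ is infinite and never contains $v'$.

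Put $v:=g_s(x)$. The residual strategy $s_x(y):=s(xy)$ is winning from $v$, because every play from $v$ consistent with $s_x$ is a suffix of a play consistent with $s$. Its game tree is exactly $g'$, so it never visits $v'$. By the preceding lemma, Player~0 therefore wins $G'$ from $v$, which establishes item~1.

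For item~2 I use the strategy $s$ itself, stopped at the first visit to $v$. Consider a play consistent with $s$. If it reaches $v$, it has reached $F\cup\{v\}$, so it wins in $G''$. If it never visits $v$, then it satisfies $\alpha$ and contains no occurrence of $v$. In that case it satisfies $\rho$ if and only if it satisfies $\rho-\{v\}$, since removing a state that never occurs changes neither which left-hand sides are visited infinitely often nor which right-hand sides are visited only finitely often. Hence the play also satisfies $\alpha''$.

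One degenerate case needs a remark. If $x=\varepsilon$, then $v=v_0$ mimics itself, and item~2 holds trivially provided $v_0$ is read as a target. I will either allow this case directly or choose $x$ to be a non-root node, which is possible because the subtree of an infinite tree avoiding $v'$ contains infinite proper subtrees that also avoid $v'$.

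\textbf{($\Leftarrow$)} Let $s''$ be a winning strategy for $G''$ and $s'$ a winning strategy for $G'$ from $v$. Player~0 plays $s''$ until the play first reaches $v$, and from then on switches to $s'$.

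A play that never reaches $v$ is consistent with $s''$. It therefore avoids $U$, and either reaches $F$ or satisfies $\rho-\{v\}$. Since $v$ never occurs in the play, satisfying $\rho-\{v\}$ is the same as satisfying $\rho$.

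A play that reaches $v$ has a suffix that is consistent with $s'$ in $A[v_0\mapsto v]$. That suffix avoids $U\cup\{v'\}$, and either reaches $F-\{v'\}\subseteq F$ or satisfies $\rho-\{v'\}$. Because $v'$ never occurs in the suffix, it satisfies $\rho$. The Rabin condition depends only on tails of plays, and the finite prefix before $v$ avoided $U$, so the whole play satisfies $\alpha$.

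\textbf{Main obstacle.} The step I expect to need the most care is the bookkeeping around the mimic construction $A[v_0\mapsto v]$ when passing between a node of $g_s$ and a game started at $v$. This is also where the normalization steps (dropping pairs whose right-hand side becomes empty and folding their left-hand side into $B$, and the assumption $Win^r\subseteq F$) must be checked to preserve winning sets. Beyond that, the substantive content is already in the State-Avoidance Lemma.
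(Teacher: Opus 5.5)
Your proposal is correct and follows the paper's route. The forward direction is the State-Avoidance Lemma plus the declare-$v'$-unsafe lemma (your residual strategy $s_x$ at a non-root node). Item~2 and the converse, which the paper leaves to ``combining the lemmas,'' you supply by restricting $s$ and by the same stop-and-switch composition used in the reduction from extended to reachability--Rabin games.

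One thing to state explicitly is the standing normalization that every pair of $\rho$ has $R_i\neq\emptyset$. The State-Avoidance Lemma, and hence your forward direction, fails without it: a play visiting all of $R$ infinitely often could still win through a pair $(L,\emptyset)$. Under this normalization the excluded case $R=\emptyset$ is trivial, since then both sides of the equivalence are false.
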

We call $G'$ the game obtained from $G$ by \emph{declaring} $v'$ \emph{unsafe}.
In practice, there  is no need to enumerate all pairs $(v,v')$ of states.
\begin{corollary}
Let $G=\langle A,\alpha\rangle$ be a reachability-Rabin game with arena $A=\langle V,v_0, \Sigma^0,\Sigma^1, \delta \rangle$ and winning condition $\alpha=\langle U, F,\emptyset, \rho\rangle$, and Rabin states $R=\{v_1,\ldots,v_n\}$, where $Win^r(G)\subseteq F$. If $G$ is not trivial, then Player 0 has a winning strategy in $G$ iff Player 0 has a winning strategy in the game $G_W$ obtained from $G$ by declaring all states in $W$ to be reachability states, where $W=\bigcup_{i=1}^k Win^{rR} (G_i) \not=\emptyset$, and $G_i$ is the extended Rabin game obtained by declaring $v_i$ unsafe.
\end{corollary}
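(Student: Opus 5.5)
We prove the two directions separately, using the preceding Theorem together with the State-Avoidance Lemma and the reduction lemma.

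\textbf{Soundness.} First we show that declaring states of $W$ to be reachability states does not change the answer at $v_0$. Let $w\in Win^{rR}(G_i)$ for some $i$. Then Player~0 has a strategy from $w$ that:
\begin{itemize}
\item avoids $U\cup\{v_i\}$;
\item either reaches $F-\{v_i\}$ or satisfies $\rho-\{v_i\}$.
\end{itemize}
A play that never visits $v_i$ satisfies $\rho-\{v_i\}$ iff it satisfies $\rho$, since $L_j$ and $R_j$ are only modified at $v_i$. Hence the same strategy is winning from $w$ in $G$ itself, so $W\subseteq Win^{rR}(G)$.

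\textbf{Direction ($\Leftarrow$).} Assume Player~0 wins $G_W$, whose condition is $\langle U,F\cup W,\emptyset,\rho-W\rangle$. Player~0 follows the $G_W$-strategy. If the play reaches $W$, Player~0 switches to the winning strategy from that state just constructed. Otherwise the play avoids $W$ forever, and we check that it then satisfies $\rho$. A play that never enters $W$ satisfies $\rho-W$ iff it satisfies $\rho$, so the composite strategy wins $G$. This is the same strategy-switching pattern used in the proof of the extended-Rabin reduction lemma.

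\textbf{Direction ($\Rightarrow$).} Assume Player~0 wins $G$ from $v_0$. A winning strategy in $G$ is also winning in $G_W$: the objective of $G_W$ only adds targets, and on plays that avoid $W$ the conditions $\rho-W$ and $\rho$ agree. So this direction is immediate once the same observation is in hand.

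\textbf{Nonemptiness of $W$.} This is the substantive claim. Since $G$ is nontrivial and Player~0 wins, fix a winning strategy $s$ with pruned tree $g_s$. By the State-Avoidance Lemma there is some $v_i\in R$ and an infinite subtree of $g_s$, rooted at a node labelled $v$, in which $v_i$ never occurs. By the reduction lemma, Player~0 then wins $G_i$ from $v$, so $v\in W$. We need $v\notin F$ for this to be a genuinely new reachability state. Since $g_s$ is pruned, every node labelled by a state of $F$ is a leaf, so the root of an infinite subtree is not in $F$. Hence $v\notin F$, and since $F\cap U=\emptyset$ is also maintained, $W\not\subseteq F$. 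This ensures that the recursion makes progress, and it is also the point where the hypothesis that $G$ is nontrivial is needed.

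\textbf{Main obstacle.} The delicate step is the bookkeeping in $G_i$ when $v_i$ lies in $F$ or $L_j$. The condition $\alpha'$ subtracts $v_i$ from $F$, and we must confirm that removing it from $\rho$ and declaring it unsafe never converts a $G$-losing play into a $G_i$-winning one. This holds because any play visiting $v_i$ is losing in $G_i$. The same fact underlies both the soundness argument and the final equivalence.
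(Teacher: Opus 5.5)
Your proposal is correct and follows the paper's implicit derivation. The paper obtains the corollary by aggregating its Theorem over all pairs $(v,v_i)$, and that Theorem rests on exactly your ingredients: soundness of each $Win^{rR}(G_i)$ in $G$ with strategy switching for $(\Leftarrow)$, and the State-Avoidance Lemma plus the lemma on declaring $v_i$ unsafe to get $W\neq\emptyset$ (indeed $W\not\subseteq F$) when Player~0 wins the nontrivial game. Your observation that the equivalence with $G_W$ holds for any $W\subseteq Win^{rR}(G)$, so that the real content is the nonemptiness of $W$, is a clean way to organize these ingredients.
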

Note that all the $G_i$ games, as well as $G_W$, have a Rabin number smaller than $n$.

\subsection{Pseudocode and Complexity Analysis}
We now give pseudocode for the Rabin-number-based algorithm. We first introduce three auxiliary procedures for deleting states and normalizing Rabin conditions.

\captionof{algorithm}{Auxiliary procedures}
\label{alg:auxiliary}
\label{alg:auxiliary}
\noindent\rule{\linewidth}{0.4pt}
\begin{algorithmic}[1]

\Function{\textsc{DeleteState}}{$\rho,v$}
    \State $\rho' \gets \emptyset$
    \ForAll{$(L,R)\in\rho$}
        \State $L' \gets L-\{v\}$; $R' \gets R-\{v\}$
        \If{$L'\neq\emptyset$}
            \State append $(L',R')$ to $\rho'$
        \EndIf
    \EndFor
    \State \Return $\rho'$
\EndFunction

\Function{\textsc{DeleteStates}}{$\rho,X$}
    \State $\rho' \gets \emptyset$
    \ForAll{$(L,R)\in\rho$}
        \State $L' \gets L-X$; $R' \gets R-X$
        \If{$L'\neq\emptyset$}
            \State append $(L',R')$ to $\rho'$
        \EndIf
    \EndFor
    \State \Return $\rho'$
\EndFunction

\Function{\textsc{NormalizePairs}}{$\rho$}
    \State $B \gets \emptyset$; $\rho' \gets \emptyset$
    \ForAll{$(L,R)\in\rho$}
        \If{$L=\emptyset$}
            \State \textbf{continue}
        \ElsIf{$R=\emptyset$}
            \State $B\gets B\cup L$
        \Else
            \State append $(L,R)$ to $\rho'$
        \EndIf
    \EndFor
    \State \Return $(B,\rho')$
\EndFunction

\end{algorithmic}
\noindent\rule{\linewidth}{0.4pt}

\vspace{0.2cm}
We say that a Rabin condition $\rho$ is \emph{normalized} if every
pair $(L,R)\in\rho$ satisfies $L\neq\emptyset$ and $R\neq\emptyset$. The procedure \textsc{NormalizePairs} in Algorithm~3 computes a normalized condition: pairs with $L=\emptyset$ are discarded, since they cannot be satisfied, while pairs with $R=\emptyset$ are absorbed into the B\"uchi set $B$. Normalization provides a simple base case when the resulting $\rho$ is empty and a canonical representation for memoization. In particular, recursive calls that yield the same normalized subproblem can be identified and solved only once, as used in the complexity analysis below.

\begin{algorithm}[!h]
\caption{\textsc{SolveExtendedRabin}$(A,U,F,B,\rho)$}
\begin{algorithmic}[1]
\State $W_r \gets \textsc{SolveReachability}(A,U,F)$
\State $F \gets F \cup W_r$; $B \gets B - W_r$
\State $\rho \gets \textsc{DeleteStates}(\rho,W_r)$
\If{$v_0 \in F$}
    \State \Return $F$
\EndIf
\Repeat
 \State $B_{\mathit{old}} \gets B$
  \State $X \gets \textsc{SolveReachabilityRabin}(A,U,F \cup B,\rho)$
    \State $B \gets B \cap X$
\Until{$B = B_{\mathit{old}}$}
\State \Return $\textsc{SolveReachabilityRabin}(A,U,F \cup B,\rho)$
\end{algorithmic}
\end{algorithm}

\begin{remark}
Lines 4-6 of Algorithm 4 are an early-termination optimization. If $v_0 \in F$, then Player~0 is already known to win from the initial state, and the remaining fixpoint computation is therefore unnecessary. In this case, however, the returned set $F$ need not coincide with the full winning region of the extended Rabin game, since additional states may be winning through the B\"uchi or Rabin objectives.
\end{remark}

\paragraph{Complexity Analysis:}
Let $n:=|V|$, let $M:=|G|$ be the size of the game graph, let
$k:=|\rho|$ be the Rabin index, and let
$r := \left|\bigcup_{(L_i,R_i)\in \rho} R_i\right|$
be the Rabin number. A naive analysis of the size of the recursion tree, i.e., all possible orders of eliminating states or pairs, would yield factorial bounds with respect to $n$ (for number-based recursion). With memoization, in contrast, each normalized subproblem is solved only once. Thus, the complexity is governed by the number of distinct subproblems.
\noindent
In Algorithm 5, Rabin-reachability subproblems are generated by changing the sets $U,F \subseteq V$ and by deleting Rabin states from $\rho$. Thus, the number of distinct memoized subproblems is bounded by $2^{O(n)}$. Since each subproblem performs only polynomial-time fixpoint computations and polynomial-time normalization, the memoized complexity is
$T_{\mathrm{num}}^{\mathrm{memo}}
=
2^{O(n)}\cdot poly(n,M,k)$.
Here the Rabin index $k$ occurs only in the term $poly(n,M,k)$. Since the Rabin index $k$ can be exponentially larger than the Rabin number $r$, this analysis suppresses the complexity in $k$, which we come back to below. Note that the size of the alphabet is a linear factor in the size $M$ of the game, while absorbing the alphabet into the state set as in \cite{rabin_finite_1959} would yield complexity that is exponential in the size of the alphabet.


\begin{algorithm}[!h]
\caption{\textsc{SolveReachabilityRabin-Number}$(A,U,F,\rho)$}
\begin{algorithmic}[1]
\State $W_r \gets \textsc{SolveReachability}(A,U,F)$
\State $F \gets F \cup W_r$
\State $\rho \gets \textsc{DeleteStates}(\rho,W_r)$
\State $(B_0,\rho) \gets \textsc{NormalizePairs}(\rho)$
    \Comment{move pairs $(L,\emptyset)$ into the B\"uchi set}
\If{$|\rho|=0$}
     \State \Return $\textsc{SolveReachabilityBuchi}(A,U,F,B_0)$
\EndIf
\State $R \gets \textsc{RightUnion}(\rho)$ \Comment{$\bigcup\{R \mid (L,R) \in \rho\}$, the set of all Rabin states}
\State $W \gets \emptyset$

\ForAll{$v \in R$}
\State $W_v \gets \textsc{SolveExtendedRabin}(A,\;U \cup \{v\},\;F - \{v\},\;B_0,\;\textsc{DeleteState}(\rho,v))$
    \State $W \gets W \cup W_v$
\EndFor
\State \Return $\textsc{SolveExtendedRabin}(A,\;U,\;F \cup W,\;B_0,\;
\textsc{DeleteStates}(\rho,W))$
\end{algorithmic}
\end{algorithm}

\begin{remark}
Hausmann et al. \cite{hausmann2024symbolic} describe an approach to solve Emerson-Lei games, which generalize Rabin games. Rabin's algorithm \cite{rabin1972automata} can be viewed as a special case of their algorithm. While Rabin's algorithm is expressed here as a recursive algorithm, the Hausmann et al. algorithm first constructs a \emph{Zielonka tree} \cite{zielonka1998infinite}, and uses it to derive a $\mu$-calculus expression over the game arena. The Zielonka tree, however, may be of exponential size. Expressing the algorithm as a recursive algorithm as done here facilitates memoization, which, as we show above, reduces the complexity.
\end{remark}

%% file: 3-new_algorithm_induction.tex
\section{Solving Reachability-Rabin Games by induction on Rabin index}

This section provides a novel algorithm inspired by Rabin's algorithm. In Rabin's algorithm we recursively remove one state at a time to simplify the underlying algorithmic problem. The key observation of this section is that, if the removed sets are chosen carefully, one can remove an entire set of states at once. The potential advantage of this reduction is in the recursion depth. While the Rabin-number-based algorithm eliminates one Rabin state at a time, the index-based reduction eliminates an entire right-hand side $R_j$ and thereby reduces the Rabin index by at least one. Thus, when the Rabin index $k$ is substantially smaller than the Rabin number $r$, a recursive branch may require substantially fewer reduction steps. 

Consider again the reachability--Rabin game $G=\langle A,\alpha \rangle$, with $\alpha=\langle U,F,\emptyset,\rho \rangle$,  where $\rho=\{(L_1,R_1),$ $\ldots,(L_k,R_k)\}$. Let $R=\bigcup_{i=1}^k R_i$. As before, we assume that $R\cap F=\emptyset$, since a play that reaches $F$ is already winning for Player 0. Recall that the states in $R$ are Rabin states, and $|R|$ is the \emph{Rabin number} of $G$.

Consider the reachability game $G_r=\langle A,U,F\rangle$ and compute $Win^r(G_r)$. Clearly, $Win^r(G_r) \subseteq Win^{rR}(G)$. Thus, if $v_0\in Win^r(G_r)$, then $v_0 \in Win^{rR}(G)$.
Furthermore, if a  play enters a state $u\in Win^r(G_r)$, then  Player 0 can continue from there with the strategy that placed $u$ in $Win^r(G_r)$. We declare all states in $Win^r(G_r)$ to be \emph{reachability states} by using the revised winning condition $\alpha_r=(U,F\cup Win^r(G_r),\emptyset, \rho-Win^r(G_r))$, where $\rho-Win^r(G_r)$ is obtained from $\rho$ by replacing each Rabin pair $(L,R)$ by the pair $(L-Win^r(G_r),R-Win^r(G_r))$. Thus, as before, we  assume that when we have a reachability--Rabin game $G=\langle A,\alpha \rangle$, with $\alpha=\langle U,F,\emptyset,\rho \rangle$, we have $Win^r(G) \subseteq F$ and $v_0 \not \in F$.

Consider a Rabin condition $\rho=\{(L_1,R_1),\ldots,(L_k,R_k)\}$. For a right-hand side $R_j$, let $\rho-R_j$ be obtained from $\rho$ by replacing every Rabin pair $(L_i,R_i)$ with $(L_i-R_j,R_i-R_j)$. A Rabin pair $(\emptyset,R)$ can be discarded, since it can play no role in winning a play. Rabin pairs with empty right-hand side can be converted to a B\"uchi condition, as discussed above. Since the pair $(L_j,R_j)$ becomes $(L_j-R_j,\emptyset)$ in $\rho-R_j$, we reduce the Rabin index by going from $\rho$ to $\rho-R_j$.

\begin{lemma} [The Set-Reduction Lemma]\label{index_set_avoidance_lemma}
Let $G=\langle A,\alpha\rangle$ be a reachability-Rabin game
with arena $A=\langle V,v_0, \Sigma^0,\Sigma^1, \delta \rangle$, winning condition $\alpha=\langle U, F,\emptyset, \rho\rangle$, with $\rho=\{(L_1,R_1),\ldots,(L_k,R_k)\}$. We assume wlog that $Win^r(G)\subseteq F$. Suppose that Player 0 has a winning strategy in $G$ from a state $v\not\in F$. Let  $s: (\Sigma^1)^* \rightarrow \Sigma^0$ be a winning strategy for Player 0 in $G$ from $v$, with pruned game tree $g_s$. Then there exists some $1\leq j \leq k$ such that $g_s$ has an infinite subtree $g'$ in which no state of $R_j$ occurs.
\end{lemma}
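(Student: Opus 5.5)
I would mirror the proof of the State-Avoidance Lemma, replacing single Rabin states by entire right-hand sides $R_1,\ldots,R_k$. The argument is by contradiction. Suppose that for every $j$, every infinite subtree of $g_s$ contains an occurrence of some state of $R_j$. We will build an infinite branch of $g_s$ that visits every $R_j$ infinitely often. Such a branch violates every Rabin pair, so it fails $\rho$. Since $g_s$ is pruned, an infinite branch never visits $F$. The branch avoids $U$ because $s$ is winning, so it is a losing play consistent with $s$. This is the desired contradiction.

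The construction proceeds in rounds, cycling through $j=1,\ldots,k,1,\ldots$. We maintain a current node $x$ such that the subtree $g_x$ of $g_s$ rooted at $x$ is infinite. Initially $x=\varepsilon$; its subtree is infinite because the game is nontrivial, or, from $v\notin F$, because otherwise $v\in Win^r(A,U,F)\subseteq F$.

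Given the current $x$ and the next index $j$, the assumption says $g_x$ contains some node $x y$ with $g_s(xy)\in R_j$. We then show that the subtree rooted at $xy$ is infinite. It is winning for Player~0 from $u=g_s(xy)$. If it were finite, Player~0 would win the reachability game with unsafe set $U$ and target $F$ from $u$. Then $u\in Win^r(A,U,F)\subseteq F$, contradicting $R\cap F=\emptyset$. This uses the standing wlog assumptions and the normalization $R_j\cap F=\emptyset$. We set the new current node to $xy$ and move to the next index.

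The nodes chosen form a strictly increasing chain of prefixes. We need $y\neq\varepsilon$ at each step, and this is the main subtle point. Choosing $y=\varepsilon$ would happen only if $g_s(x)\in R_j$ already. To force progress, I would apply the assumption to an infinite proper subtree: since $g_x$ is infinite and finitely branching, some child subtree of $x$ is infinite. This uses König's lemma with $\Sigma^1$ finite. So I would search for the $R_j$-occurrence inside that child subtree, which guarantees strict extension.

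The limit of the chain is an infinite branch $p$. Along $p$, each $R_j$ is hit once in every cycle of $k$ rounds, hence infinitely often. This completes the contradiction. Therefore some $j$ has an infinite subtree $g'$ of $g_s$ avoiding $R_j$ entirely.
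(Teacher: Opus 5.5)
Your proof is correct and follows the paper's argument. Like the paper, you cycle through $R_1,\ldots,R_k$, use $Win^r(A,U,F)\subseteq F$ and $R\cap F=\emptyset$ to show that the subtree rooted at each $R_j$-occurrence is infinite, and obtain an infinite branch of the pruned tree that visits every $R_j$ infinitely often, so it never reaches $F$ and satisfies no pair. Your extra step of searching for the occurrence inside an infinite child subtree, to force $y\neq\varepsilon$ when the right-hand sides overlap, usefully tightens a point the paper leaves implicit.
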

\begin{proof}
If $g_s$ has an infinite subtree $g'$ in which no state of $R_1$ occurs, we are done. Otherwise, some $v_1\in R_1$ must be in the range of $g_s$, that is, there is some $x_1\in (\Sigma^1)^*$ such that $g_s(x_1)=v_1$. The subtree $g_{x_1}$ rooted at $x_1$ is a winning game tree for Player 0 starting from $v_1$. If $g_{x_1}$ is finite, then Player 0 wins the reachability game with unsafe states $U$ and target $F$ from $v_1$, so $v_1\in Win^r(G)$. But then, by assumption, $v_1\in F$, which means that $v_1\not\in R_1$, so $g_{x_1}$ is not finite.

So assume that $g_{x_1}$ is infinite. We can now repeat the same argument. If $g_{x_1}$ has an infinite subtree $g'$  in which no state of $R_2$ occurs, we are done. Otherwise, some $v_2 \in R_2$ must be in the range of $g_{x_1}$, that is, there is some $x_2\in (\Sigma^1)^*$ such that $g_{x_1}(x_2)=v_2$. The subtree $g_{x_2}$ rooted at $x_2$ is a winning game tree for Player 0 starting from $v_2$. Again, if $g_{x_2}$ is finite then it must be that Player 0 wins the   reachability game with unsafe states $U$ and target $F$ from $v_2$, so $v_2\in Win^r(G)$, which implies that $v_2\in F$, and $v_2\not\in R_2$. It follows that $g_{x_2}$  is also infinite.

By repeating this argument, going through $R_1,\ldots,R_k$ in a cycle, we have found an infinite branch $p$ of $g_s$ where $R_1,\ldots,R_k$ are all visited infinitely often. Therefore, no Rabin pair $(L_j,R_j)$ is satisfied along $p$, since each right-hand side $R_j$ is visited infinitely often. Since $g_s$ is a
winning game tree, the branch $p$ must therefore reach $F$. But $g_s$ is pruned, so no infinite branch can continue after reaching $F$, a contradiction.

It follows that there exists some $R_j$ such that $g_s$ has an infinite subtree $g'$ in which no state of $R_j$ occurs.\end{proof}

Let $G=\langle A,\alpha\rangle$ be a reachability-Rabin game with arena $A=\langle V,v_0, \Sigma^0,\Sigma^1, \delta \rangle$, winning condition $\alpha=\langle U, F,\emptyset, \rho\rangle$, with  $\rho=\{(L_1,R_1),\ldots,(L_k,R_k)\}$, where $Win^r(G)\subseteq F$ and $v_0\not\in F$. Let  $s: (\Sigma^1)^* \rightarrow \Sigma^0$ be a winning strategy for Player 0 in $G$, with pruned game tree $g_s$. By Lemma \ref{index_set_avoidance_lemma}, there is a Rabin pair $(L_j,R_j)$ such that $g_s$ has an infinite subtree $g'$ rooted at some node $x\in(\Sigma^1)^*$,  in which no state of $R_j$ occurs. Let $g_s(x)=v$. Then Player 0 has a winning strategy in the game $G=\langle A,\alpha\rangle[v_0\mapsto v]$, such that no play visits a state in $R_j$. But then we might as well consider all states in $R_j$ to be  unsafe states, since the play never enters them. So we have obtained:
\begin{lemma}
Let $G=\langle A,\alpha\rangle$ be a nontrivial reachability-Rabin game with arena $A=\langle V,v_0, \Sigma^0,\Sigma^1, \delta \rangle$, and winning condition $\alpha=\langle U, F,\emptyset, \rho\rangle$, $\rho=\{(L_1,R_1),\ldots,(L_k,R_k)\}$, where $Win^r(G)\subseteq F$. If Player 0 has a winning strategy $s$ from a state $v$ such that, for some $1\leq j \leq k$, we have that no state of $R_j$ occurs in $g_s$, then Player 0 has a winning strategy from $v$ in the extended Rabin game  $G'=\langle A,\alpha'\rangle$, with the winning  condition $\alpha'=\langle U\cup R_j, F-R_j,\emptyset, \rho-R_j\rangle$.
\end{lemma}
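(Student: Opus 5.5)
The plan is to reuse the strategy $s$ itself, unchanged, as a winning strategy for Player~0 from $v$ in $G'$. Every play consistent with $s$ is a branch of the pruned game tree $g_s$. By hypothesis, no state of $R_j$ occurs in $g_s$, so every consistent play avoids $R_j$. The task is to check, branch by branch, that satisfying $\alpha$ while avoiding $R_j$ implies satisfying $\alpha'=\langle U\cup R_j, F-R_j,\emptyset,\rho-R_j\rangle$.

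First, fix a branch $p$ of $g_s$, that is, a play from $v$ consistent with $s$. Since $s$ wins $G$, $p$ avoids $U$. It also avoids $R_j$ because $g_s$ contains no state of $R_j$. Hence $p$ avoids $U\cup R_j$, which is the safety part of $\alpha'$.

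Next, split on how $p$ wins $\alpha$.
\begin{enumerate}
\item If $p$ reaches $F$, the state it reaches lies in $F$ and, since $p$ never visits $R_j$, also in $F-R_j$. So the reachability part of $\alpha'$ holds.
\item If $p$ satisfies some pair $(L_i,R_i)$, then $p$ visits $L_i$ infinitely often and $R_i$ finitely often. Because $p$ never visits $R_j$, those infinitely many visits are to $L_i-R_j$, and it visits $R_i-R_j\subseteq R_i$ only finitely often. So $p$ satisfies the modified pair $(L_i-R_j,R_i-R_j)$ of $\rho-R_j$.
\end{enumerate}
Two normalization cases need a remark. If $R_i-R_j=\emptyset$ (for instance when $i=j$), the pair becomes a Büchi requirement on $L_i-R_j$. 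Infinitely many visits to $L_i-R_j$ satisfy it, and this is exactly how \textsc{NormalizePairs} treats such a pair. If $L_i-R_j=\emptyset$, the discarded pair could not have been the one $p$ satisfies, since $p$ visits $L_i-R_j$ infinitely often. Thus every branch satisfies $\alpha'$, and $s$ is winning from $v$ in $G'$.

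The main obstacle is bookkeeping rather than mathematics. I must make sure that pruning $g_s$, which cuts branches at their first $F$-state, does not hide behaviour relevant to $G'$. For this I will argue that after a play first reaches $F$ it is already winning in $G'$, so the unpruned continuations are irrelevant. I will also use the standing assumption that $v\notin R_j$, which holds because $v$ is the root label of $g_s$. I expect the normalization into the Büchi set to be the only place where the notion of extended Rabin game with a nonempty $B$ is genuinely needed.
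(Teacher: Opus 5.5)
Your proposal is correct and takes the same route as the paper. The paper argues only informally that, since no play consistent with $s$ enters $R_j$, those states may as well be declared unsafe, so $s$ itself wins for $\alpha'$. Your branch-by-branch check makes that argument explicit, including the plays that are cut off once they reach $F$ and the pairs that become B\"uchi or are discarded.
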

Note that if $G$ has Rabin index $k$, then $G'$ has Rabin index smaller than $k$. As we saw earlier, when we delete states from a Rabin condition, we may obtain Rabin pairs with empty right-hand sides, which then need to be converted into a B\"uchi condition.

Combining the lemmas, we obtain:
\begin{theorem}
Let $G=\langle A,\alpha\rangle$ be a nontrivial  reachability-Rabin game with arena $A=\langle V,v_0, \Sigma^0,\Sigma^1, \delta \rangle$ and winning condition $\alpha=\langle U, F,\emptyset, \rho\rangle$, with $\rho=\{(L_1,R_1),\ldots,(L_k,R_k)\}$, where $Win^r(G)\subseteq F$. Then Player 0 has a winning strategy in $G$ iff there is a state $v$ and an index $1\leq j \leq k$ such that:
(1) Player 0 has a winning strategy in the extended Rabin game $G'=\langle A,\alpha'\rangle[v_0\mapsto v]$, where $\alpha'=\langle U\cup R_j, F-R_j,\emptyset, \rho-R_j\rangle$, and
(2) Player 0 has a winning strategy in the extended Rabin game  $G''=\langle A,\alpha''\rangle$, where $\alpha''=\langle U, F\cup \{v\},\emptyset, \rho-R_j\rangle$.
\end{theorem}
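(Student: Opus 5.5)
The plan is to prove the two directions separately, following the same pattern as the analogous theorem for the Rabin number, with the Set-Reduction Lemma and the lemma following it playing the roles that the State-Avoidance Lemma and its companion lemma played there.

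\textbf{($\Rightarrow$).} Assume Player~0 wins $G$ with strategy $s$, and let $g_s$ be its pruned game tree. Since $G$ is nontrivial, $v_0\notin F$ and $g_s$ is infinite.
\begin{enumerate}
\item By Lemma~\ref{index_set_avoidance_lemma}, there are an index $j$ and a node $x$ such that the subtree $g'$ rooted at $x$ is infinite and contains no state of $R_j$.
\item Set $v:=g_s(x)$. The strategy $y\mapsto s(xy)$ is winning from $v$, and its game tree is exactly $g'$, in which $R_j$ never occurs.
\item The lemma immediately preceding the theorem then gives condition~(1) for this $v$ and $j$.
\end{enumerate}
For condition~(2), I keep the original strategy $s$ from $v_0$ and stop each play when it first reaches $v$. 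A play that reaches $v$ satisfies the reachability goal $F\cup\{v\}$. A play that never reaches $v$ is a play of $g_s$, so it avoids $U$ and either reaches $F$ or satisfies some pair $(L_i,R_i)$ of $\rho$. Removing states from a pair only shrinks $R_i$, so such a play still satisfies the modified condition, except possibly through a shrunken $L_i$. That gap should be harmless, because the play visits $L_i$ infinitely often while $R_i$ only finitely often.

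\textbf{($\Leftarrow$).} Take a winning strategy $s''$ for $G''$ and a winning strategy $s'$ for $G'$ from $v$. Player~0 plays $s''$ until the play first enters $v$ (if it ever does), and then switches to $s'$.
\begin{itemize}
\item If the play never reaches $v$, it is consistent with $s''$. So it avoids $U$ and either reaches $F$ or satisfies the modified Rabin condition, and I must translate this back to $\rho$.
\item If the play reaches $v$, its suffix is consistent with $s'$. It avoids $U\cup R_j$, so $R_j$ is visited only finitely often overall, since the finite prefix contributes only finitely many visits. The suffix either reaches $F-R_j\subseteq F$ or satisfies $\rho-R_j$. A pair of $\rho-R_j$ satisfied by the suffix lifts to the corresponding pair of $\rho$, because the suffix never touches $R_j$. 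The B\"uchi pair $(L_j-R_j,\emptyset)$ lifts to $(L_j,R_j)$ for the same reason.
\end{itemize}

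\textbf{Main obstacle.} The delicate step is soundness of condition~(2) in the ($\Leftarrow$) direction, when the play never reaches $v$. If $\rho-R_j$ is read literally there, a $G''$-winning play could satisfy $(L_j-R_j,\emptyset)$ while visiting $R_j$ infinitely often, and that play is not winning in $G$. My first attempt is to read (2) as in the Rabin-number theorem, i.e.\ declaring $v$ a reachability state and deleting only $v$ from $\rho$. Under that reading, the satisfied pair of $\rho-\{v\}$ lifts directly to $\rho$. If the literal $\rho-R_j$ is intended, I would instead try to restrict to plays avoiding $R_j$ in the stage before $v$ is reached, for example by also adding $R_j$ to $U$ in $G''$. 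Failing that, I would fold the corrected condition into the algorithm in the style of the Corollary, taking $W$ to be the union of the winning regions of the games $G_j$ and recursing on $G_W$, where soundness is immediate.
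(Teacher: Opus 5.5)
\textbf{The failing step is condition~(2) in your ($\Rightarrow$) direction.} With the printed $\alpha''=\langle U,F\cup\{v\},\emptyset,\rho-R_j\rangle$, a play of $g_s$ may satisfy $(L_i,R_i)$ while visiting only $L_i\cap R_j$ infinitely often. Such a play satisfies neither $(L_i-R_j,R_i-R_j)$ nor, in general, any other pair of $\rho-R_j$. So the ``gap'' is not harmless.

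In fact the printed statement fails in this direction outright. Let Player~1 move from $v_0$ to $a$ or to $b$, give $a$ and $b$ self-loops, and take $U=F=\emptyset$ and $\rho=\{(\{a\},\{b\}),(\{b\},\{a\})\}$.
Player~0 wins $G$: the play $v_0\,a^\omega$ satisfies the first pair and $v_0\,b^\omega$ the second. Also $Win^r(G)=\emptyset$, so the hypotheses hold.
Condition~(1) holds only for $(v,j)=(a,1)$ and $(v,j)=(b,2)$, because from $v_0$ Player~1 can move straight into the newly unsafe state. These are exactly the pairs the Set-Reduction Lemma produces here.
For $(a,1)$ we have $\rho-R_1=\{(\{a\},\emptyset),(\emptyset,\{a\})\}$. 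Player~1 wins $G''$ by moving to $b$: the play $v_0\,b^\omega$ never reaches $a$ and never visits $a$ infinitely often. The case $(b,2)$ is symmetric.

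You also correctly suspected a ($\Leftarrow$) failure. Concretely, take the forced play $v_0\,(a\,b)^\omega$ plus an isolated self-loop at $c$, with $\rho=\{(\{a,c\},\{b\})\}$. Player~0 loses $G$, yet $(v,j)=(c,1)$ satisfies both (1) and (2). So the theorem as printed is false in both directions. The paper's own justification is only ``combining the lemmas'', which is your outline; the defect is in the statement, and no argument can establish it as written.

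\textbf{Your first repair is the right one, but adopt it for the whole theorem, not only for ($\Leftarrow$).} Take $\alpha''=\langle U,F\cup\{v\},\emptyset,\rho-\{v\}\rangle$, exactly as in the Rabin-number theorem; this is evidently the intended statement.
Then (2) in ($\Rightarrow$) is immediate. Under $s$, every play that avoids $v$ is a play of $g_s$, and $L_i-\{v\}$, $R_i-\{v\}$ agree with $L_i$, $R_i$ on any play that never visits $v$.
Your ($\Leftarrow$) argument then goes through as written. A play that never reaches $v$ satisfies $(L_i,R_i)$ whenever it satisfies $(L_i-\{v\},R_i-\{v\})$. A suffix after $v$ lifts from $\rho-R_j$ to $\rho$ because it never visits $R_j$.

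Drop the other fallbacks. Adding $R_j$ to the unsafe set of $G''$ restores ($\Leftarrow$) but breaks ($\Rightarrow$), because a winning strategy may have to pass through $R_j$ before it enters an $R_j$-free subtree. For example, take the forced play $v_0\,b\,a^\omega$ with $\rho=\{(\{a\},\{b\})\}$. Player~0 wins $G$, and condition~(1) forces $v\neq v_0$. Then $G''$ is lost as soon as the play enters the unsafe state $b$. Retreating to the Corollary replaces the theorem rather than proving it.
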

We call $G'$ the game obtained from $G$ by \emph{declaring} $R_j$ \emph{unsafe}.

\begin{corollary}
Let $G=\langle A,\alpha\rangle$ be a reachability-Rabin game with arena $A=\langle V,v_0, \Sigma^0,\Sigma^1, \delta \rangle$ and winning condition $\alpha=\langle U, F,\emptyset, \rho\rangle$, $\rho=\{(L_1,R_1),\ldots,(L_k,R_k)\}$, where $Win^r(G)\subseteq F$. If $G$ is not trivial, then Player 0 has a winning strategy in $G$ iff $W=\bigcup_{i=1}^k Win^{rR} (G_i) \not=\emptyset$, where $G_i$ is the extended Rabin game obtained by declaring $R_i$ unsafe, and Player 0 has a winning strategy in the game $G_W$ obtained from $G$ by declaring all states in $W$ to be reachability states. 
\end{corollary}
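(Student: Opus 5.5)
We prove the two directions separately, reading $Win^{rR}(G_i)$ as the set of states $v$ from which Player~0 wins $G_i$ with $v_0$ mimicking $v$. In $G_W$, the states of $W$ are added to $F$ and deleted from $\rho$ (as in $\alpha_r$). The argument follows the proof of the previous theorem but aggregates over all $j$ and $v$ at once.

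\textbf{($\Rightarrow$)} Assume Player~0 wins $G$ from $v_0$ with strategy $s$, and let $g_s$ be its pruned game tree. Since $G$ is nontrivial, $v_0\notin F$, so the Set-Reduction Lemma applies. It yields an index $j$ and a node $x$ such that the subtree of $g_s$ rooted at $x$ is infinite and avoids $R_j$. Let $v=g_s(x)$. By the subsequent lemma, Player~0 wins $G_j$ from $v$, so $v\in Win^{rR}(G_j)\subseteq W$, and in particular $W\neq\emptyset$.

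It remains to show that Player~0 wins $G_W$ from $v_0$. We use the same strategy $s$. Every play consistent with $s$ avoids $U$ and either reaches $F$ or satisfies some pair $(L_i,R_i)$. Suppose a play satisfies $(L_i,R_i)$ but never visits $W$. Then it satisfies $(L_i-W,R_i-W)$, because removing states the play never visits changes neither infinitely-often nor finitely-often visits. Otherwise the play visits $W\subseteq F_W$ and wins in $G_W$. Reaching $F$ is winning in $G_W$ since $F\subseteq F\cup W$. We must also check that $W\cap U=\emptyset$, so that adding $W$ to $F$ respects the convention $F\cap U=\emptyset$. This holds because any state in $Win^{rR}(G_i)$ that lay in $U$ would lose immediately.

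\textbf{($\Leftarrow$)} Assume $W\neq\emptyset$ and that Player~0 wins $G_W$ from $v_0$ with strategy $s_W$. Player~0 plays $s_W$ until the play first enters $F\cup W$. If it never does, $s_W$ guarantees satisfaction of some pair $(L_i-W,R_i-W)$ on a play that never visits $W$, and hence of $(L_i,R_i)$. If the play reaches $F$, Player~0 has won. If it reaches some $w\in W$, choose $i$ with $w\in Win^{rR}(G_i)$ and switch to a winning strategy for $G_i$ from $w$.

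The main step is checking that a win in $G_i$ is a win in $G$. Plays consistent with this strategy avoid $U\cup R_i$. Each such play either reaches $F-R_i\subseteq F$, or satisfies a pair of the normalized condition $\rho-R_i$, or satisfies the Büchi set produced by normalization. A pair $(L_l-R_i,R_l-R_i)$ transfers to $(L_l,R_l)$ in $G$ because $R_i$ is never visited. The Büchi set is $L_i-R_i$ together with any $L_l$ whose right-hand side became empty. Infinitely many visits to $L_i-R_i$ with no visits to $R_i$ satisfy $(L_i,R_i)$. The other Büchi components transfer for the same reason, since their $R_l\subseteq R_i$ is never visited.

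Finally, the composed strategy switches strategies at most once along any play, so it is well defined. Each play is therefore winning for $G$, which gives $v_0\in Win^{rR}(G)$.
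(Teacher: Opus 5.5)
Your proof is correct and is essentially the argument the paper leaves implicit in ``combining the lemmas'': the Set-Reduction Lemma and the unsafe-declaration lemma give $W\neq\emptyset$, the same strategy wins $G_W$ by monotonicity, and a single switch at the first visit to $W$ gives the converse. One small imprecision: under the mimicking reading of $Win^{rR}$ that you adopt, the state $v$ itself is never entered, so ``a state of $U$ would lose immediately'' does not literally show $W\cap U=\emptyset$; you do not need that claim, though, since entering $U$ loses in $G_W$, so the state $w$ at which $s_W$ first meets $F\cup W$ is automatically outside $U$.
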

Note that all the $G_i$ games, as well as $G_W$, have a Rabin index smaller than $k$, or a Rabin number smaller than $n$.


\begin{algorithm}[!h]
\caption{\textsc{SolveReachabilityRabin-Index}$(A,U,F,\rho)$}
\begin{algorithmic}[1]
\State $W_r \gets \textsc{SolveReachability}(A,U,F)$
\State $F \gets F \cup W_r$
\State $\rho \gets \textsc{DeleteStates}(\rho,W_r)$
\State $(B_0,\rho) \gets \textsc{NormalizePairs}(\rho)$
\If{$|\rho|=0$}
    \State \Return $\textsc{SolveReachabilityBuchi}(A,U,F,B_0)$
\EndIf
\State $W \gets \emptyset$
\ForAll{$(L_j,R_j) \in \rho$}
    \State $W_j \gets \textsc{SolveExtendedRabin}(A,\;U \cup R_j,\;F - R_j,\;B_0,\;\textsc{DeleteStates}(\rho,R_j))$
    \State $W \gets W \cup W_j$
\EndFor
\State \Return $\textsc{SolveExtendedRabin}(A,\;U,\;F \cup W,\;B_0,\;
\textsc{DeleteStates}(\rho,W))$
\end{algorithmic}
\end{algorithm}

\paragraph{Complexity Analysis}
As in the Rabin-number-based algorithm, the subproblems generated during the execution of \textsc{SolveReachabilityRabin-Index} are indexed by a tuple $(U,F,\rho)$, with $U,F \subseteq V$, and with $\rho$ uniquely determined by the set $X \subseteq V$ of states deleted so far, since deletion acts uniformly on all pairs of $\rho$ regardless of whether $X$ is a single Rabin state, a right-hand side $R_j$, or an arbitrary winning subregion returned by a recursive call. Hence the number of distinct $\rho$'s realized during the recursion is at most $2^n$, and combined with the $2^n$ choices each for $U$ and $F$, the number of distinct memoized subproblems is bounded by $2^{O(n)}$ for both Algorithm~5 and Algorithm~6. Since each subproblem performs only polynomial-time fixpoint computations and normalization, the memoized complexity of both algorithms is $  T^{\mathit{memo}}_{\mathit{num}} \;=\; T^{\mathit{memo}}_{\mathit{index}}   \;=\; 2^{O(n)} \cdot \mathit{poly}(n,M,k).$
Memoization thus removes the factorial dependence on both the Rabin number
and the Rabin index, and no separate exponential-in-$k$ factor is needed on
top of the $2^{O(n)}$ term. This shared worst-case bound does not mean the
two algorithms would behave identically in practice: Algorithm~5 eliminates one
Rabin state at a time and branches over the (up to $r$) Rabin states in $R$
at each call, so a recursive branch may require up to $r$ elimination rounds, whereas Algorithm~6 eliminates an entire right-hand side $R_j$ per round and branches over the (at most $k$) Rabin pairs of $\rho$, so a branch requires at most $k$ rounds. 

While $k$ could, in theory, be exponentially larger than $r$, every deletion of a Rabin set $R_j$ deletes at least one state, so there can be at most $r$ such deletion steps along an execution branch.
Furthermore, $k$ is logarithmic in $r$ for Rabin automata arising from LTL-to-Rabin translations \cite{KMSZ18}, so Algorithm~6 is expected to explore substantially fewer distinct subproblems along any given execution than Algorithm~5. 

\begin{table}[!h]
\centering
\small
\renewcommand{\arraystretch}{1.35}
\begin{tabular}{lcc}
\hline
\textbf{Algorithm} & \textbf{Time complexity} & \textbf{Main dependence} \\
\hline

Piterman--Pnueli Explicit Ranking \cite{PitermanP06} &
$O(mn^{k+1}kk!)$ &
factorial in $k$ \\

Piterman--Pnueli Symbolic Fixpoint \cite{PitermanP06} &
$O(n^{k+1}k!)$ &
factorial in $k$ \\

Colourful Universal Trees \cite{MST24} &
$\widetilde{O}\!\left(mn(k!)^{1+o(1)}\right)$ &
near-factorial in $k$ \\

 Recursive Muller Solver \cite{LKX26} & $O(|V|!\,|V|\,(|E|+k|V|))$ & factorial in $|V|$ \\ 
 
 Muller Dynamic Programming \cite{LKX26} & $O\!\left((k|V|+2^{|V|}|E|)\,|V|\right)$ & exponential in $|V|$ \\

Rabin-Vardi-Di Stasio  &
$2^{O(n)}\cdot poly(n,M,k)$ &
exponential in $n$, no factorial term \\

Vardi-Di Stasio  &
$2^{O(n)}\cdot poly(n,M,k)$ &
exponential in $n$, no factorial term \\


\hline
\end{tabular}
\caption{Complexity comparison for Rabin-game algorithms. Here $n = |V|$ is the number of states, $m=|E|$ is the number of edges, $k$ is the Rabin index, and $M=|G|$ is the size of the game representation.}
\label{tab:rabin-complexity}
\end{table}

\vspace{0.1cm}
It is shown in \cite{casares2024simple} that under the Exponential Time Hypothesis (ETH), solving a Rabin game cannot be done in time $2^{o(k log k)}\cdot n^{O(1)}$. That is, the complexity cannot be both polynomial in the Rabin number $r$ and exponential in the Rabin index $k$. As shown in Table~\ref{tab:rabin-complexity},
our new algorithms are the first to be exponential in both $n$ and $k$.